\documentclass[journal]{IEEEtran}
\usepackage[T1]{fontenc}

\usepackage{amssymb}
\usepackage{amsmath}
\usepackage{graphicx}
\usepackage{cite}
\usepackage{url}
\usepackage{fancyhdr}
\usepackage{caption}
\usepackage{amsthm}
\usepackage{setspace}
\usepackage{hyperref}
\usepackage{algorithm}
\usepackage{algorithmic}
\usepackage{multirow}
\usepackage{makecell}
\usepackage{mathtools}
\usepackage{subcaption}
\usepackage{bm}
\usepackage{tikz}
\usepackage{xurl}
\usepackage{stfloats}
\usepackage{mathrsfs}

\usepackage{booktabs}
\usepackage{siunitx}
\usepackage{balance}
\usepackage{array}
\usepackage{tabularx}

\hypersetup{colorlinks=true,
	linkcolor=blue,
	citecolor=blue,      
	urlcolor=black,
}

\expandafter\def\expandafter\normalsize\expandafter{%
	\normalsize%
	\setlength\abovedisplayskip{5.5pt}%
	\setlength\belowdisplayskip{5.5pt}%
	\setlength\abovedisplayshortskip{3.5pt}%
	\setlength\belowdisplayshortskip{3.5pt}%
}

\allowdisplaybreaks

\theoremstyle{definition}

\newtheorem{theorem}{Theorem}
\newtheorem{lemma}{Lemma}
\newtheorem{remark}{Remark}
\newtheorem{corollary}{Corollary}

\graphicspath{{./figure/}}

\begin{document}

\title{Modeling and Analysis for Joint Design of Communication and Control}

	\author{Xu Gan,~\IEEEmembership{Member, IEEE}, Chongjun Ouyang,~\IEEEmembership{Member, IEEE}, and Yuanwei Liu,~\IEEEmembership{Fellow,~IEEE}

	\vspace{-2mm}

\thanks{Xu Gan and Yuanwei Liu are with the Department of Electrical and Computer Engineering, The University of Hong Kong, Hong Kong (e-mail: \{eee.ganxu, yuanwei\}@hku.hk).}
\thanks{Chongjun Ouyang is with the School of Electronic Engineering and Computer Science, Queen Mary University of London, London E1 4NS, U.K. (e-mail: c.ouyang@qmul.ac.uk).}
}

\maketitle

\begin{abstract}
A unified analytical framework for joint design of communication and control (JDCC) is proposed. Within this framework, communication transmission delay and steady-state control variance are derived as the two fundamental JDCC performance metrics. The Pareto boundary is then established to characterize the optimal communication-control trade-off in JDCC systems. To further obtain closed-form expressions, their performance regions are derived under maximum-ratio transmission (MRT) and zero-forcing (ZF) beamforming. For system reliability evaluation, the communication-only and control-only outage probabilities are first derived. Based on these, the JDCC outage probability is defined to quantify the probability that the communication-delay and control-error requirements cannot be simultaneously satisfied. Its analytical expressions are then derived under both MRT and ZF schemes. Finally, numerical results validate the theoretical results and reveal that: (1) the Pareto boundary characterizes the trade-off frontier and performance limit of JDCC systems and (2) the JDCC reliability is jointly determined by the uplink-downlink closed-loop control and its coupling with communication.
\end{abstract}

\begin{IEEEkeywords}
Joint design of communication and control (JDCC), outage probability, Pareto boundary, rate-distortion theory.
\end{IEEEkeywords}

\section{Introduction}
\label{sec:intro}

\IEEEPARstart{T}{he} proliferation of emerging applications, such as industrial automation, robotic manipulation, and unmanned aerial vehicles (UAVs), has imposed increasingly stringent requirements on wireless systems\cite{aceto2019survey,tataria20216g,zeydan20246g}. Different from conventional data-centric services, these tasks require real-time closed-loop interaction and control with physical devices. As such, the base station (BS) is expected to be equipped with control capability\cite{park2017wireless}, thereby leveraging the existing large-scale communication infrastructure for these emerging applications. Under this paradigm, the wireless channel is no longer merely a medium for data transmission, but becomes an integral part of the control loop itself. This trend is driving next-generation wireless networks beyond communication-only architectures toward integrated platforms that can jointly support efficient data services and reliable remote control\cite{wang2023review}.

To meet such requirements, joint design of communication and control (JDCC)\cite{liu2017joint,jahangiri2025integrated,10891172} has emerged as a promising paradigm, where communication and control are jointly designed within a unified wireless system and share the same hardware platform and spectrum resources. However, such integration also induces fundamental coupling and trade-off between communication and control. This calls for a well-designed JDCC framework that enables the wireless infrastructure to simultaneously support low-latency communication and low-error control. Despite its importance, such a unified modeling and analytical framework is still absent in the existing literature.

\subsection{Prior Works}
To better clarify the distinction between the present work and the existing literature, related studies are briefly reviewed from the following three perspectives: 1) communication-centric works, 2) control-centric works, and 3) JDCC-related works.

\subsubsection{Communication-Centric Works}
The first category includes works that support control-related applications through enhanced wireless transmission design, while still optimizing communication performance as the main objective. For instance, the authors of \cite{voigtlander20175g} examined 5G ultra-reliable low-latency communication (URLLC) for distributed robotic systems, with the main focus on whether the wireless link can provide sufficiently low delay and high reliability for remote operation. In \cite{wu2023latency}, UAV-enabled mobile edge computing was designed for mission-critical tasks by minimizing the maximum computation latency through URLLC-based offloading and joint resource optimization. In heterogeneous robotic systems, the authors of \cite{lv2023multi} developed a distributed 5G architecture to improve multi-robot cooperation via efficient wireless information exchange. In addition, survey and perspective works such as \cite{liu2021robotic}, \cite{mu2021intelligent}, and \cite{masaracchia2021uav} have emphasized the role of reliable and low-latency wireless connectivity in robotic and UAV applications. However, these works do not establish a theoretical framework that explicitly characterizes the mathematical relationship between control performance and wireless network parameters.

\subsubsection{Control-Centric Works}
The second category includes works that analyze control performance under simplified communication processes. In this direction, the authors of \cite{ulusoy2010wireless} studied wireless model-based predictive control performance by jointly incorporating packet deadlines, actuator buffering, and predictive control to tolerate packet losses and variable delays. In \cite{walsh2002stability}, the communication effect was abstracted into scheduling-induced error bounds and transfer intervals for control stability analysis. The experimental study in \cite{ploplys2004closed} examined closed-loop control over IEEE 802.11b networks, where the wireless effect was mainly represented by delay, packet loss, and adaptive sampling. In \cite{pajic2011wireless}, the wireless control network paradigm was developed and mean-square stability was analyzed under packet drops and link failures. However, these works do not account for physical-layer channels or transceiver beamforming, and hence cannot explicitly reveal the control performance under realistic wireless networks.

\subsubsection{JDCC-Related Works}
The third category includes works that study the co-design of communication and control. For instance, the authors of \cite{casanova2003multirate} redesigned the control architecture to compensate communication-induced delay and timing uncertainty over a shared industrial medium. In~\cite{gaid2006optimal}, control and scheduling were jointly optimized over a bandwidth-limited network to enhance control quality under communication constraints. In a wireless vehicular scenario, the authors of \cite{zeng2018integrated} incorporated transmission delay and interference into stability analysis, and then mapped the resulting control requirements into wireless communication constraints. More recently, \cite{zhou2025integrated} jointly designed sensing, communication, and control for multi-AGV closed-loop systems in industrial environments. However, these studies still focus on the joint-design methods mainly by imposing communication constraints on control or control constraints on communication, rather than by explicitly capturing their mutual interaction and the resulting trade-off performance.

\subsection{Motivations and Contributions}
Most existing works examine communication support, control performance, or communication-control co-design in wireless systems, but still lack a unified analytical framework for explicitly characterizing their intrinsic coupling. As a result, two fundamental questions remain open: \emph{how to characterize the communication-control trade-off in JDCC systems}, and \emph{how to quantify the reliability of their joint operation}. Motivated by these questions, this paper develops a unified modeling and analytical framework for JDCC systems. In particular, we characterize the Pareto boundary of the JDCC performance region and establish the joint outage probability. The main contributions of this paper are summarized as follows.
\begin{itemize}
  \item We develop a unified analytical framework for JDCC systems, where a multi-antenna BS simultaneously serves a communication user (CU) and a controllable device (CD) in the same wireless network, as illustrated in Fig.~\ref{fig:illustration}. In particular, the control functionality is explicitly modeled as an uplink-downlink closed loop through state reporting and control-command delivery, while the communication functionality is served in parallel over shared wireless resources.

  \item Within the developed framework, we derive communication transmission delay and steady-state control variance as the two fundamental JDCC performance metrics. By leveraging rate-distortion theory, we establish the mathematical relationship between the uplink/downlink transmission capacities and the control variance, from which the state-variance evolution equation is derived. This equation is then used to derive the closed-form steady-state control variance and the corresponding stability conditions. We then analyze its asymptotic behavior to identify the dominant wireless-link bottleneck under different operating regimes.

  \item Using the derived communication and control metrics, we characterize the Pareto boundary of JDCC performance, which reveals the optimal trade-off under shared wireless resources. We then derive closed-form analytical expressions for the communication-control performance regions under two typical beamforming schemes, namely, maximum-ratio transmission (MRT) and zero-forcing (ZF). Based on these analytical results, we further compare the Pareto boundary with the MRT and ZF performance regions under different transmit-power regimes and channel correlations.

  \item We define and derive the communication-only and control-only outage probabilities as reliability metrics for single-function JDCC systems. Based on these results, we further establish the joint outage probability as a new reliability metric for JDCC systems, which is defined as the probability that the JDCC system cannot simultaneously satisfy the communication-delay and control-error requirements. We then derive the analytical expressions of the JDCC outage probability under MRT and ZF schemes, thereby revealing the reliability impact of communication-control coupling.
  
\end{itemize}

\subsection{Organization and Notations}
The remainder of this paper is organized as follows. Section~\ref{sec:system} presents the JDCC system model and develops the communication and control performance metrics. Section~\ref{sec:beamforming} characterizes the achievable communication-control Pareto-optimal boundary, together with closed-form performance region results under MRT and ZF beamforming. Section~\ref{sec:outage} investigates the outage behavior of the JDCC system and derives the single-function and joint outage probabilities. Section~\ref{sec:numerical} provides numerical results to validate the theoretical results and illustrate the main design insights. Finally, Section~\ref{sec:conclusion} concludes the paper.

\emph{Notation}: Throughout this paper, italic letters denote scalars, whereas boldface lowercase and uppercase letters denote vectors and matrices, respectively. The set of complex-valued $M \times N$ matrices is denoted by $\mathbb{C}^{M \times N}$. For any vector or matrix, $(\cdot)^T$ and $(\cdot)^H$ represent the transpose and Hermitian transpose, respectively. The symbols $\mathbf{I}_M$ and $\mathbf{0}$ denote the $M \times M$ identity matrix and the all-zero vector or matrix of appropriate dimension. Moreover, $[\mathbf{h}]_m$ denotes the $m$-th entry of $\mathbf{h}$, $\|\cdot\|$ denotes the Euclidean norm, and $|\cdot|$ denotes the magnitude of a scalar. The operators $\mathbb{E}[\cdot]$ and $\Pr(\cdot)$ stand for expectation and probability, respectively, $\triangleq$ denotes equality by definition, and $[x]^+ \triangleq \max(x,0)$. Finally, $\mathcal{CN}(0,\sigma^2)$ and $\mathcal{CN}(\mathbf{0},\sigma^2\mathbf{I}_M)$ denote scalar and vector circularly symmetric complex Gaussian distributions, respectively.

\section{System Model and Performance Metrics}
\label{sec:system}
This section introduces the system model and performance metrics of the proposed JDCC system. We first present the system architecture and the associated uplink-downlink transmission protocol, under which the BS simultaneously supports the communication functionality for the CU and the closed-loop control functionality for the CD. We then characterize the corresponding signal transmission models and define the communication transmission delay and the steady-state control variance as the fundamental performance metrics of the two functionalities.

\begin{figure}[t]
  \centering
  \includegraphics[width=\linewidth]{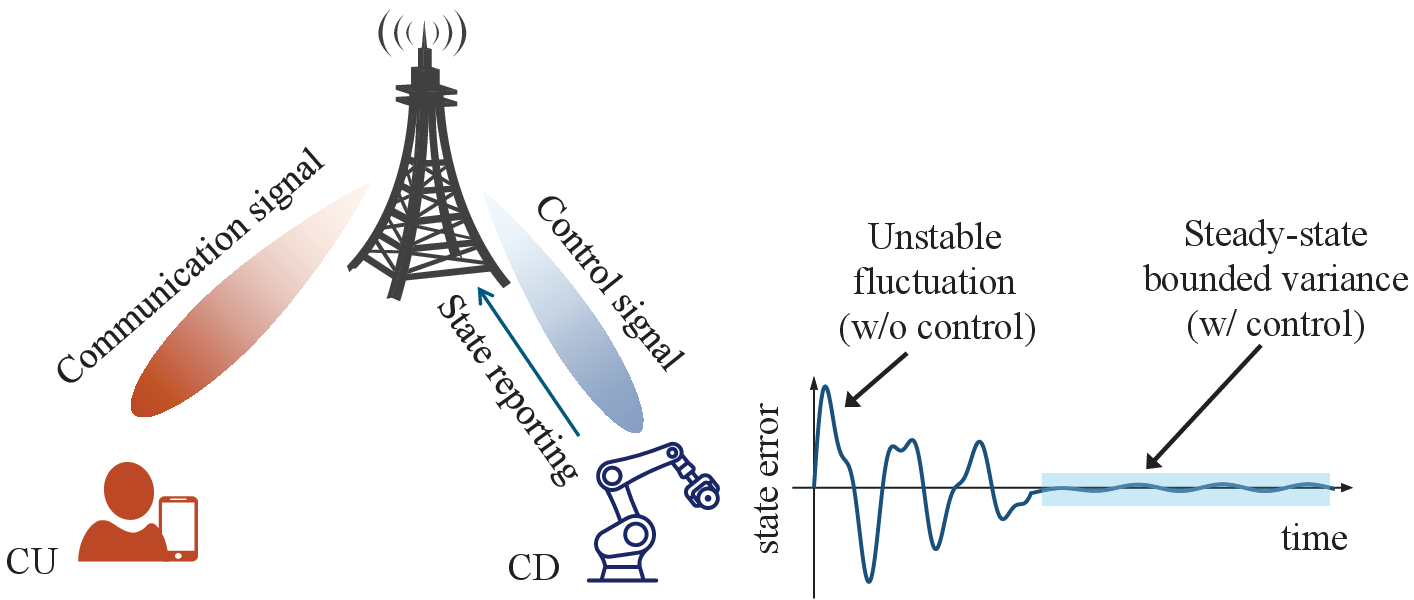}
  \caption{Illustration of the proposed JDCC system.}
  \label{fig:illustration}
\end{figure}

\subsection{JDCC System Model}
We consider the JDCC system as illustrated in Fig.~\ref{fig:illustration}, where an $M$-antenna BS simultaneously serves one CU and one CD, with $M \ge 2$. The BS delivers the communication signal to the CU, while supporting the control function of the CD through an uplink-downlink closed loop. Specifically, the CD first reports its state to the BS via the uplink, and the BS then sends the corresponding control signal back to the CD through the downlink. As shown in Fig.~\ref{fig:illustration}, the control objective is to suppress the unstable state fluctuation of the CD and maintain a bounded steady-state variance. To capture the dynamics of the controlled process, the CD is modeled as a scalar, discrete-time linear time-invariant (LTI) system, whose state evolves as
\begin{equation}\label{eq:state_equation}
  x_{n+1} = a\,x_n + b\,u_n + w_n,
\end{equation}
where $x_n,\, u_n,\, w_n \in \mathbb{C}$ are the process state, the input control signal, and the process noise at the $n$-th control interval, respectively. The evolving parameters $a, \, b \in \mathbb{C}$ are assumed constant across time. The process noise $w_n \sim \mathcal{CN}(0,\sigma_w^2)$ is i.i.d.\ across time and independent of all channel noises.
\begin{remark}
  The state transition equation \eqref{eq:state_equation} can be used to model a mobile robot arm or a UAV executing a trajectory-tracking task. In this case, the process state $x_n = x_{n,\mathrm{R}} + j x_{n,\mathrm{I}}$ represents the 2D positional tracking error at the $n$-th control interval. We consider the case $|a| > 1$ and $b \neq 0$. In this case, without the corrective input $u_n$ transmitted by the BS, the state error diverges exponentially and becomes unbounded over time. This makes the state variance:
  \begin{equation}
    V_n \triangleq \mathbb{E}[|x_n|^2],
  \end{equation}
  naturally serve as the control performance metric for quantifying the mean-square state fluctuation.
\end{remark}

\subsection{JDCC Transmission Protocol}
To support both downlink communication for the CU and uplink-downlink closed-loop control for the CD, the JDCC system operates over two timescales: the control interval indexed by $n$ with sampling period $T_s^D$, and the communication update period $T_s^U$. Each control interval consists of an uplink state-reporting phase and a downlink dual-functional transmission phase. In the uplink phase, bandwidth $B_{\mathrm{up}}$ is allocated for the CD to report its state to the BS. In the downlink phase, bandwidth $B_{\mathrm{dn}}$ is used by the BS to simultaneously deliver the control command to the CD and the communication data to the CU. Both uplink state reporting and downlink control delivery are modeled in a block-based manner. Specifically, within the $n$-th control interval of duration $T_s^D$, the current state sample and the corresponding control command are conveyed through coded uplink and downlink transmission blocks, respectively.

\subsubsection{Uplink CD State Reporting}

In the uplink phase, the CD reports the current state sample $x_n$ to the BS through an uplink transmission block associated with the $n$-th control interval. Let $\mathbf{s}_{D,n}^{\mathrm{up}} \in \mathbb{C}^{L_{\mathrm{up}} \times 1}$ denote the coded uplink signal block, where $L_{\mathrm{up}}$ is the uplink block length within one control interval. The received uplink signal block at the BS is
\begin{equation}
  \mathbf{Y}_{n}^{\mathrm{up}} = \mathbf{h}_D (\mathbf{s}_{D,n}^{\mathrm{up}})^T + \mathbf{Z}_{n}^{\mathrm{up}},
  \label{eq:uplink_io}
\end{equation}
where $\mathbf{h}_D \in \mathbb{C}^{M \times 1}$ is the uplink channel vector, and $\mathbf{Z}_{n}^{\mathrm{up}} \in \mathbb{C}^{M \times L_{\mathrm{up}}}$ is the AWGN matrix with independent entries distributed as $\mathcal{CN}(0,\sigma_{\mathrm{up}}^2)$, where $\sigma_{\mathrm{up}}^2 = N_0 B_{\mathrm{up}}$. The uplink block satisfies the average power constraint $\mathbb{E}[\|\mathbf{s}_{D,n}^{\mathrm{up}}\|^2] \le L_{\mathrm{up}} P_{\mathrm{up}}$.

The BS applies maximum-ratio combining (MRC) to the received uplink block. The corresponding uplink SNR is
\begin{equation}
  \mathrm{SNR}_D^{\mathrm{up}} = \frac{P_{\mathrm{up}}\,\|\mathbf{h}_D\|^2}{\sigma_{\mathrm{up}}^2}.
  \label{eq:snr_up}
\end{equation}
This SNR determines the reliable information rate available for state reporting within the $n$-th control interval. Accordingly, the state available at the BS is interpreted as a rate-limited reconstruction of $x_n$.

\subsubsection{Downlink Dual-Functional Transmission}
\label{subsubsec:downlink_two_timescale}
After receiving the uplink state description, the BS computes the corresponding control command for the CD while simultaneously transmitting communication data to the CU. Accordingly, the downlink signal block in the $n$-th control interval is
\begin{equation}
  \mathbf{S}_{n}^{\mathrm{dn}} = \mathbf{w}_D (\mathbf{s}_{D,n}^{\mathrm{dn}})^T + \mathbf{w}_U (\mathbf{s}_{U,\kappa(n)}^{\mathrm{dn}})^T,
  \label{eq:tx_signal}
\end{equation}
where $\mathbf{s}_{D,n}^{\mathrm{dn}},\, \mathbf{s}_{U,\kappa(n)}^{\mathrm{dn}} \in \mathbb{C}^{L_{\mathrm{dn}} \times 1}$ denote the coded signal blocks for control delivery and communication transmission, respectively, with normalized block powers satisfying $\mathbb{E}[\|\mathbf{s}_{D,n}^{\mathrm{dn}}\|^2] = \mathbb{E}[\|\mathbf{s}_{U,\kappa(n)}^{\mathrm{dn}}\|^2] = L_{\mathrm{dn}}$. Here, $L_{\mathrm{dn}}$ is the downlink block length within one control interval, and $\mathbf{w}_D, \mathbf{w}_U \in \mathbb{C}^{M \times 1}$ are the beamforming vectors for the CD and the CU, respectively, subject to the sum-power constraint $\|\mathbf{w}_D\|^2 + \|\mathbf{w}_U\|^2 \le P_{\mathrm{dn}}$. The communication block paired with the $n$-th control update is indexed by $\kappa(n)$, whose explicit form is not needed in the sequel.

\paragraph{Downlink Transmission for the CU}
Let $\mathbf{h}_U \in \mathbb{C}^{M\times 1}$ denote the downlink channel vector from the BS to the CU. The received signal block at the CU is
\begin{equation}
  \mathbf{y}_{U,n}^{\mathrm{dn}} = \mathbf{h}_U^H \mathbf{w}_U\, \mathbf{s}_{U,\kappa(n)}^{\mathrm{dn}} + \mathbf{h}_U^H \mathbf{w}_D\, \mathbf{s}_{D,n}^{\mathrm{dn}} + \mathbf{z}_{U,n}^{\mathrm{dn}},
  \label{eq:y_U}
\end{equation}
where $\mathbf{z}_{U,n}^{\mathrm{dn}} \in \mathbb{C}^{1 \times L_{\mathrm{dn}}}$ is the AWGN vector with independent entries distributed as $\mathcal{CN}(0,\sigma_{\mathrm{dn}}^2)$, and $\sigma_{\mathrm{dn}}^2 = N_0 B_{\mathrm{dn}}$. By treating the control signal as interference, the downlink SINR for the CU is
\begin{equation}
  \mathrm{SINR}_U^{\mathrm{dn}} = \frac{|\mathbf{h}_U^H \mathbf{w}_U|^2}{|\mathbf{h}_U^H \mathbf{w}_D|^2 + \sigma_{\mathrm{dn}}^2}.
  \label{eq:sinr_U}
\end{equation}
This SINR determines the communication rate available to the CU.

\paragraph{Downlink Transmission for the CD}
Under time-division duplex (TDD) reciprocity, the BS-to-CD downlink channel is represented by $\mathbf{h}_D^H$. The received signal block at the CD is
\begin{equation}
  \mathbf{y}_{D,n}^{\mathrm{dn}} = \mathbf{h}_D^H \mathbf{w}_D\, \mathbf{s}_{D,n}^{\mathrm{dn}} + \mathbf{h}_D^H \mathbf{w}_U\, \mathbf{s}_{U,\kappa(n)}^{\mathrm{dn}} + \mathbf{z}_{D,n}^{\mathrm{dn}},
  \label{eq:y_D}
\end{equation}
where $\mathbf{z}_{D,n}^{\mathrm{dn}} \in \mathbb{C}^{1 \times L_{\mathrm{dn}}}$ denotes the AWGN vector with independent entries distributed as $\mathcal{CN}(0,\sigma_{\mathrm{dn}}^2)$. By treating the communication signal as interference, the downlink SINR for the CD is
\begin{equation}
  \mathrm{SINR}_D^{\mathrm{dn}} = \frac{|\mathbf{h}_D^H \mathbf{w}_D|^2}{|\mathbf{h}_D^H \mathbf{w}_U|^2 + \sigma_{\mathrm{dn}}^2}.
  \label{eq:sinr_D}
\end{equation}
This SINR determines the reliable rate available for control delivery within the control interval. Accordingly, the control command available at the CD is interpreted as a rate-limited reconstruction of the intended command.

The above protocol specifies the uplink and downlink block-transmission models for the communication and control functions in JDCC systems. Based on the resulting SNR/SINR expressions, we next define the corresponding communication and control performance metrics through an information-theoretic distortion-based characterization.

\subsection{JDCC Performance Metric}
\label{subsec:icac_metric}

The JDCC system is characterized by two metrics: the communication transmission delay and the steady-state control variance.

\subsubsection{Communication Performance Metric}
For the communication function, define the downlink communication SINR as $\Gamma_U \triangleq \mathrm{SINR}_U^{\mathrm{dn}}$. The communication transmission delay for delivering a payload of $Q_U$ bits is
\begin{equation}
	\tau_U  = \frac{Q_U}{B_{\mathrm{dn}}\log_2\!\bigl(1 + \Gamma_U\bigr)}.
	\label{eq:cu_delay}
\end{equation}

\subsubsection{Control Performance Metric}
For the control function, we adopt the steady-state control variance, which quantifies the long-term mean-square fluctuation of the controlled process. The key step is to establish the state-variance recursion under uplink state-reporting distortion and downlink control-delivery distortion. To this end, we first relate the uplink and downlink reliable transmission rates to the corresponding distortions through Gaussian rate-distortion theory under the adopted block-based transmission protocol. We then incorporate these distortions into the closed-loop dynamics to derive the state-variance evolution, from which the steady-state variance, stability condition, and asymptotic regimes as follows.

\paragraph{Uplink State Reconstruction Distortion}
We first characterize the state-reconstruction distortion at the BS. In each control interval, the process state $x_n$ is encoded at the CD and conveyed to the BS through the uplink block. Since only a finite number of bits can be reliably delivered within one control interval, the reconstructed state $\hat{x}_n$ is subject to nonzero distortion. To quantify this distortion, we relate the uplink channel capacity to the Gaussian rate-distortion function. For a zero-mean circularly symmetric complex Gaussian source with variance $\sigma^2$, the rate-distortion function is~\cite{cover2006elements}
\begin{equation}
	R(D) = \log_2\!\left(\frac{\sigma^2}{D}\right), \qquad D \le \sigma^2,
	\label{eq:distortion}
\end{equation}
where $R(D)$ denotes the minimum number of bits per source sample required to achieve mean-square distortion $D$. Under the adopted ideal Gaussian source-channel coding abstraction, the reliable transmission rate within each control interval is determined by the corresponding physical-layer SNR/SINR, and the resulting distortion is given by the Gaussian rate-distortion limit.

Since $x_n \sim \mathcal{CN}(0, V_n)$, the rate-distortion function in~\eqref{eq:distortion} applies directly. Let $D_n^{\mathrm{up}} \triangleq \mathbb{E}[|x_n-\hat{x}_n|^2]$ denote the uplink reconstruction distortion. Then,
\[
R(D_n^{\mathrm{up}})=\log_2\!\left(\frac{V_n}{D_n^{\mathrm{up}}}\right).
\]
On the other hand, the uplink achievable rate is $C_D^{\mathrm{up}} = B_{\mathrm{up}}\log_2\!\bigl(1+\mathrm{SNR}_D^{\mathrm{up}}\bigr)$. Under the adopted block-coding abstraction, the maximum number of reliably conveyed bits over one control interval is $C_D^{\mathrm{up}}T_s^D$. By matching this rate with the Gaussian rate-distortion function and defining $\alpha_{\mathrm{up}} \triangleq B_{\mathrm{up}}T_s^D$, we have $\log_2\!\left(\frac{V_n}{D_n^{\mathrm{up}}}\right)
=
\alpha_{\mathrm{up}}\log_2\!\bigl(1+\mathrm{SNR}_D^{\mathrm{up}}\bigr)$, which yields
\begin{equation}
	D_n^{\mathrm{up}}
	=
	\frac{V_n}{(1+\mathrm{SNR}_D^{\mathrm{up}})^{\alpha_{\mathrm{up}}}}.
	\label{eq:Dn_up}
\end{equation}
Under the Gaussian rate-distortion-optimal reconstruction model, the reconstruction error $\Delta x_n = x_n-\hat{x}_n$ is independent of $\hat{x}_n$, and $\Delta x_n \sim \mathcal{CN}(0,D_n^{\mathrm{up}})$.

\paragraph{Downlink Control Command Distortion}
Based on the reconstructed state $\hat{x}_n$ available at the BS, the control command is generated according to the one-step state-cancellation law~\cite{stengel1994optimal,aastrom2013computer}
\begin{equation}
	d_n = -(a/b)\hat{x}_n.
\end{equation}
Since $\hat{x}_n \sim \mathcal{CN}(0, V_n-D_n^{\mathrm{up}})$, the control command is also complex Gaussian, i.e.,
\[
d_n \sim \mathcal{CN}\!\left(0,\frac{|a|^2}{|b|^2}(V_n-D_n^{\mathrm{up}})\right).
\]
The BS encodes $d_n$ into the downlink block and the CD reconstructs it as $\hat{d}_n$. Define the resulting downlink distortion as $D_n^{\mathrm{dn}} \triangleq \mathbb{E}[|d_n-\hat{d}_n|^2]$. Since $d_n$ is a complex Gaussian source, the same rate-distortion function applies. The achievable downlink rate is $C_D^{\mathrm{dn}} = B_{\mathrm{dn}}\log_2\bigl(1+\mathrm{SINR}_D^{\mathrm{dn}}\bigr)$, and the maximum number of reliably conveyed bits over one control interval is $C_D^{\mathrm{dn}}T_s^D$. By matching this rate with the Gaussian rate-distortion function and defining $\alpha_{\mathrm{dn}} = B_{\mathrm{dn}} T_s^D$, we obtain $\log_2\!\left(\frac{\frac{|a|^2}{|b|^2}(V_n-D_n^{\mathrm{up}})}{D_n^{\mathrm{dn}}}\right)
=
\alpha_{\mathrm{dn}}\log_2\!\bigl(1+\mathrm{SINR}_D^{\mathrm{dn}}\bigr)$, which yields
\begin{equation}
	D_n^{\mathrm{dn}}
	=
	\frac{\frac{|a|^2}{|b|^2}(V_n-D_n^{\mathrm{up}})}{(1+\mathrm{SINR}_D^{\mathrm{dn}})^{\alpha_{\mathrm{dn}}}}.
	\label{eq:Dn_dn}
\end{equation}
Under the Gaussian rate-distortion-optimal reconstruction model, the reconstruction error $\Delta d_n = d_n-\hat{d}_n$ is independent of $\hat{d}_n$. Hence, $\Delta d_n \sim \mathcal{CN}(0,D_n^{\mathrm{dn}})$.

\paragraph{State-Variance Evolution Equation}
We now incorporate the uplink and downlink distortions into the closed-loop dynamics. At the CD, the reconstructed control command $\hat{d}_n$ is applied as the input, i.e., $u_n=\hat{d}_n$. Substituting $u_n = d_n - \Delta d_n = -(a/b)\hat{x}_n - \Delta d_n$ and $x_n = \hat{x}_n + \Delta x_n$ into~\eqref{eq:state_equation}, the next-step state becomes
\begin{equation}
  x_{n+1} = a\,\Delta x_n - b\,\Delta d_n + w_n.
\end{equation}
Taking the variance on both sides, we obtain the state-variance recursion. Under the Gaussian rate-distortion-optimal reconstruction model, $\Delta x_n$ is independent of $\hat{x}_n$. Since $d_n$ is a deterministic function of $\hat{x}_n$, $\Delta x_n$ is independent of $d_n$. Moreover, $\Delta d_n$ is independent of $d_n$, and hence independent of $\Delta x_n$. The process noise $w_n$ is also independent of all communication and control variables. Therefore, all cross terms vanish, and the state variance evolves as
\begin{equation}
  V_{n+1} = |a|^2 D_n^{\mathrm{up}} + |b|^2 D_n^{\mathrm{dn}} + \sigma_w^2.
  \label{eq:variance_recursion}
\end{equation}
This recursion shows that the state variance is governed by the uplink and downlink distortions induced by the two control links. Depending on their qualities, the variance may either diverge or converge to a finite steady-state value. This leads to the corresponding stability condition and steady-state variance characterized next.

\begin{theorem}[Steady-State Control Variance]
  \label{thm:main}
  The state variance sequence $\{V_n\}$ converges to a unique steady-state value
  \begin{equation}
  	V_\infty \triangleq \lim_{n\to\infty} V_n
  	= \frac{\sigma_w^2\,S_\alpha\,\Gamma_\alpha}
  	{S_\alpha\,\Gamma_\alpha - |a|^2(S_\alpha + \Gamma_\alpha - 1)},
  	\label{eq:V_infinity}
  \end{equation}
  if the following stability conditions are satisfied:
  \begin{subequations}
  	\label{eq:stability_condition}
  	\begin{align}
  		& S_\alpha > |a|^2, \label{eq:stab_up}\\
  		& \Gamma_\alpha > \frac{|a|^2(S_\alpha - 1)}{S_\alpha - |a|^2},
  		\label{eq:stab_dn}
  	\end{align}
  \end{subequations}
  where $S_\alpha \triangleq (1+\mathrm{SNR}_D^{\mathrm{up}})^{\alpha_{\mathrm{up}}}$ and $\Gamma_\alpha \triangleq (1+\mathrm{SINR}_D^{\mathrm{dn}})^{\alpha_{\mathrm{dn}}}$.
\end{theorem}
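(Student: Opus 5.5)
The plan is to reduce the recursion \eqref{eq:variance_recursion} to a scalar affine recursion in $V_n$ and then apply the standard convergence argument for linear difference equations. Substituting \eqref{eq:Dn_up} and \eqref{eq:Dn_dn} into \eqref{eq:variance_recursion}, with the channel realizations (hence $S_\alpha$ and $\Gamma_\alpha$) held fixed across control intervals, gives
\begin{equation*}
V_{n+1} = |a|^2\frac{V_n}{S_\alpha} + |b|^2\cdot\frac{|a|^2}{|b|^2}\cdot\frac{V_n - V_n/S_\alpha}{\Gamma_\alpha} + \sigma_w^2 = \rho\, V_n + \sigma_w^2,
\end{equation*}
where
\begin{equation*}
\rho \triangleq |a|^2\left(\frac{1}{S_\alpha} + \frac{S_\alpha-1}{S_\alpha\Gamma_\alpha}\right) = \frac{|a|^2(S_\alpha+\Gamma_\alpha-1)}{S_\alpha\Gamma_\alpha}.
\end{equation*}
Since $S_\alpha,\Gamma_\alpha\ge 1$, we have $\rho\ge 0$, so the recursion is monotone and affine.

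Next I would solve it explicitly:
\begin{equation*}
V_n = \rho^n V_0 + \sigma_w^2\sum_{k=0}^{n-1}\rho^k .
\end{equation*}
This sequence converges for every initial $V_0\ge 0$ if and only if $\rho<1$ (using $\sigma_w^2>0$). In that case the limit is the unique fixed point $V_\infty = \sigma_w^2/(1-\rho)$. Multiplying numerator and denominator by $S_\alpha\Gamma_\alpha$ yields exactly \eqref{eq:V_infinity}. Uniqueness follows because the map $V\mapsto\rho V+\sigma_w^2$ is a contraction when $\rho<1$.

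The remaining step, which I expect to be the only real point requiring care, is to show that $\rho<1$ is equivalent to (or at least implied by) \eqref{eq:stab_up}--\eqref{eq:stab_dn}. The condition $\rho<1$ reads
\begin{equation*}
S_\alpha\Gamma_\alpha > |a|^2(S_\alpha+\Gamma_\alpha-1), \quad\text{i.e.,}\quad \Gamma_\alpha\,(S_\alpha-|a|^2) > |a|^2(S_\alpha-1).
\end{equation*}
I would split on the sign of $S_\alpha-|a|^2$. If $S_\alpha\le|a|^2$, the left side is nonpositive. The right side is positive, because $|a|>1$ forces $|a|^2>1$, and a positive right side requires $S_\alpha>1$. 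The $S_\alpha>1$ case is clear; if $S_\alpha=1$, the right side is $0$ and the left side is $\Gamma_\alpha(1-|a|^2)<0$. So the inequality fails, which shows that \eqref{eq:stab_up} is necessary. If $S_\alpha>|a|^2$, dividing by $S_\alpha-|a|^2>0$ gives exactly \eqref{eq:stab_dn}.

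Hence the two conditions together are equivalent to $\rho<1$, and under them the denominator of \eqref{eq:V_infinity} is positive, so $V_\infty$ is finite and positive. The argument relies on the paper's modeling assumption that $x_n$ stays Gaussian, which is needed for \eqref{eq:Dn_up} and \eqref{eq:Dn_dn} to apply at every step. I would note this assumption rather than reprove it.
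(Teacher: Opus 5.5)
Your proposal is correct and follows the same route as the paper: you substitute the two distortions to get the affine recursion $V_{n+1} = \frac{|a|^2(S_\alpha+\Gamma_\alpha-1)}{S_\alpha\Gamma_\alpha}V_n + \sigma_w^2$, require the coefficient to be below one, and read off the fixed point. Your case split on the sign of $S_\alpha-|a|^2$, which uses $S_\alpha,\Gamma_\alpha\ge 1$, is slightly more careful than the paper's factorization $(S_\alpha-|a|^2)(\Gamma_\alpha-|a|^2) > |a|^2(|a|^2-1)$, since that step tacitly rules out the case where both factors are negative.
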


\begin{IEEEproof}
  Substituting $D_n^{\mathrm{up}}$ and $D_n^{\mathrm{dn}}$ into \eqref{eq:variance_recursion}, the variance recursion forms a first-order affine map:
  \begin{equation}
    V_{n+1}
    = 
    |a|^2\!\left(\frac{1}{S_\alpha} + \frac{1}{\Gamma_\alpha} - \frac{1}{S_\alpha\,\Gamma_\alpha}\right)
    V_n + \sigma_w^2.
    \label{eq:Vn_affine}
  \end{equation}
  The closed-loop JDCC system is mean-square stable if and only if $\frac{|a|^2(S_\alpha+\Gamma_\alpha-1)}{S_\alpha\Gamma_\alpha}< 1$, i.e., $S_\alpha\Gamma_\alpha > |a|^2(S_\alpha + \Gamma_\alpha - 1)$. This can be rewritten as $(S_\alpha - |a|^2)(\Gamma_\alpha - |a|^2) > |a|^2(|a|^2 - 1)$, which requires $S_\alpha > |a|^2$ and then $\Gamma_\alpha > |a|^2(S_\alpha - 1)/(S_\alpha - |a|^2)$, yielding~\eqref{eq:stability_condition}. Moreover, setting $V_{n+1} = V_n = V_\infty$ yields the fixed point in~\eqref{eq:V_infinity} as the unique steady-state value.
\end{IEEEproof}

To gain further insight, it is useful to examine the asymptotic behavior of $V_\infty$ under different uplink and downlink quality regimes. This analysis reveals how the control performance behaves when one or both links become sufficiently strong, and identifies which link acts as the dominant bottleneck in each regime.

\begin{corollary}[Asymptotic Regimes of the Steady-State Variance]
	\label{cor:asymptotic}
	The steady-state variance $V_\infty$ in \eqref{eq:V_infinity} exhibits the following asymptotic behaviors.
	
	\begin{enumerate}
		\item \emph{High-SNR regime on both uplink and downlink:}  
		If $S_\alpha \to \infty$ and $\Gamma_\alpha \to \infty$, then
		\begin{equation}
			V_\infty \to \sigma_w^2.
		\end{equation}
		
		\item \emph{High uplink SNR with finite downlink SINR:}  
		If $S_\alpha \to \infty$ while $\Gamma_\alpha$ remains finite and satisfies $\Gamma_\alpha > |a|^2$, then
		\begin{equation}
			V_\infty \to
			\frac{\sigma_w^2\,\Gamma_\alpha}
			{\Gamma_\alpha-|a|^2}.
		\end{equation}
		
		\item \emph{High downlink SINR with finite uplink SNR:}  
		If $\Gamma_\alpha \to \infty$ while $S_\alpha$ remains finite and satisfies $S_\alpha > |a|^2$, then
		\begin{equation}
			V_\infty \to
			\frac{\sigma_w^2\,S_\alpha}
			{S_\alpha-|a|^2}.
		\end{equation}
	\end{enumerate}
\end{corollary}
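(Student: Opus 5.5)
The plan is to work directly from the closed form \eqref{eq:V_infinity} in Theorem~\ref{thm:main}. First divide numerator and denominator by $S_\alpha\Gamma_\alpha$, which gives
\begin{equation*}
V_\infty=\frac{\sigma_w^2}{1-|a|^2\left(\frac{1}{S_\alpha}+\frac{1}{\Gamma_\alpha}-\frac{1}{S_\alpha\Gamma_\alpha}\right)} .
\end{equation*}
This is the same contraction factor that appears in the affine recursion \eqref{eq:Vn_affine}. All three regimes then reduce to evaluating the limit of the bracketed term $\rho\triangleq \frac{1}{S_\alpha}+\frac{1}{\Gamma_\alpha}-\frac{1}{S_\alpha\Gamma_\alpha}$. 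Since $V_\infty$ is a continuous function of $\rho$ wherever $|a|^2\rho<1$, the limits can be passed inside.

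In regime 1, both $1/S_\alpha\to 0$ and $1/\Gamma_\alpha\to0$, so $\rho\to0$ and $V_\infty\to\sigma_w^2$. In regime 2, $1/S_\alpha\to0$ with $\Gamma_\alpha$ fixed, so $\rho\to 1/\Gamma_\alpha$. Hence $V_\infty\to \sigma_w^2/(1-|a|^2/\Gamma_\alpha)=\sigma_w^2\Gamma_\alpha/(\Gamma_\alpha-|a|^2)$. Regime 3 follows by the symmetry of \eqref{eq:V_infinity} in $S_\alpha$ and $\Gamma_\alpha$, swapping their roles.

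The only step needing care is to verify that the stability conditions \eqref{eq:stability_condition} hold along the limit, so that \eqref{eq:V_infinity} is actually the limiting variance and the denominator stays positive.

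\begin{itemize}
\item In regime 2, condition \eqref{eq:stab_up} holds eventually since $S_\alpha\to\infty$. The right-hand side of \eqref{eq:stab_dn}, $|a|^2(S_\alpha-1)/(S_\alpha-|a|^2)$, decreases to $|a|^2$. The hypothesis $\Gamma_\alpha>|a|^2$ therefore guarantees \eqref{eq:stab_dn} for all sufficiently large $S_\alpha$.
\item In regime 3, the hypothesis $S_\alpha>|a|^2$ is exactly \eqref{eq:stab_up}. The right-hand side of \eqref{eq:stab_dn} is then a finite constant, so \eqref{eq:stab_dn} holds once $\Gamma_\alpha$ is large enough.
\item In regime 1, both conditions hold eventually.
\end{itemize}

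Because the limiting denominators are strictly positive in every case, continuity justifies each limit.
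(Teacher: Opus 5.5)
Your proposal is correct and matches the paper's (implicit) argument: the paper gives no separate proof and treats the corollary as a direct limit of \eqref{eq:V_infinity}, which is exactly what you do by dividing through by $S_\alpha\Gamma_\alpha$ (recovering the contraction factor of \eqref{eq:Vn_affine}). Your added check that \eqref{eq:stability_condition} holds eventually in each regime is a detail the paper leaves unstated. One cosmetic point: rename your bracketed quantity, since $\rho$ already denotes the channel correlation coefficient in the paper.
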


\section{Pareto Boundary Characterization of JDCC Systems}
\label{sec:beamforming}

After establishing the communication and control performance metrics of the JDCC system, we now characterize the communication-control performance region in the $(\tau_U,V_\infty)$ plane, with particular focus on its Pareto boundary. This section concentrates on the downlink trade-off, while assuming that the uplink control link is sufficiently reliable to guarantee control stability, e.g., $S_\alpha > \frac{|a|^2(\Gamma_\alpha^{\max} - 1)}{\Gamma_\alpha^{\max} - |a|^2}$, where $\Gamma_\alpha^{\max} =
(1+P_{\mathrm{dn}}\|\mathbf{h}_D\|^2/\sigma_{\mathrm{dn}}^2)^{\alpha_{\mathrm{dn}}}$
and $|a|^2 < \Gamma_\alpha^{\max}$. Under this setup, we first identify the single-function performance limits and then characterize the Pareto boundary of JDCC systems.

\subsection{Single-Function Performance Limits}
\label{subsec:separate}

To better understand the communication-control trade-off, it is instructive to first examine the two extreme operating regimes in which the BS serves only one functionality. These two single-function benchmarks correspond to communication-only transmission and control-only transmission, respectively.

\subsubsection{Paradigm A: Communication-Only Performance}

In this regime, the BS allocates all downlink power to the CU and steers the beam toward the CU via MRT, i.e.,

\begin{equation}
  \mathbf{w}_U = \sqrt{P_{\mathrm{dn}}}\,\frac{\mathbf{h}_U}{\|\mathbf{h}_U\|}, \qquad
  \mathbf{w}_D = \mathbf{0}.
  \label{eq:mrt_comm}
\end{equation}
Under this design, the communication transmission delay is minimized and attains
\begin{equation}
  \tau_U^{\mathrm{com}} = \frac{Q_U}{B_{\mathrm{dn}}\log_2\!\left(1 + \frac{P_{\mathrm{dn}}\,\|\mathbf{h}_U\|^2}{\sigma_{\mathrm{dn}}^2}\right)}.
  \label{eq:Du_min}
\end{equation}
At the same time, no control signal is delivered to the CD, i.e.,
$\mathrm{SINR}_D^{\mathrm{dn}} = 0$. As a result, the stability
condition~\eqref{eq:stability_condition} is violated for all $|a|>1$, and the
control process becomes unstable with $V_\infty \to \infty$.

\subsubsection{Paradigm B: Control-Only Performance}

In this regime, the BS allocates all downlink power to the CD and steers the beam toward the CD via MRT, i.e.,
\begin{equation}
  \mathbf{w}_D = \sqrt{P_{\mathrm{dn}}}\,\frac{\mathbf{h}_D}{\|\mathbf{h}_D\|}, \qquad
  \mathbf{w}_U = \mathbf{0}.
  \label{eq:mrt_ctrl}
\end{equation}
Under this design, the control functionality achieves its best possible performance as 
\begin{equation}
	\begin{aligned}
		& V_\infty^{\min} =
		\frac{\sigma_w^2\,S_\alpha\,\Gamma_\alpha^{\max}}
		{S_\alpha\,\Gamma_\alpha^{\max}
			- |a|^2(S_\alpha + \Gamma_\alpha^{\max} - 1)}.
		\label{eq:V_min}
	\end{aligned}
\end{equation}
where $\Gamma_\alpha^{\max} \triangleq
(1+P_{\mathrm{dn}}\|\mathbf{h}_D\|^2/\sigma_{\mathrm{dn}}^2)^{\alpha_{\mathrm{dn}}}$. At the same time, no communication payload can be delivered to the CU, and hence $\tau_U \to \infty$.

\subsection{JDCC Pareto Boundary Characterization}
\label{subsec:pareto}

In practical JDCC systems, communication and control must be supported simultaneously over shared wireless resources. This makes the communication delay $\tau_U$ and the steady-state control variance $V_\infty$ inherently coupled, so that improving one generally comes at the expense of the other. A natural way to characterize this optimal trade-off is through the Pareto boundary, which identifies the set of operating points at which neither metric can be further improved without worsening the other. In the following, we investigate the corresponding Pareto boundary of the communication delay-control error performance pairs.

Specifically, the communication-control trade-off is captured by the performance pair $(\tau_U,V_\infty)$, where $\tau_U$ is determined by the downlink communication SINR and $V_\infty$ is governed by the closed-loop uplink-downlink control quality. Under a total downlink power budget, each beamforming design $(\mathbf{w}_D,\mathbf{w}_U)$ induces one feasible operating point in the $(\tau_U,V_\infty)$ plane. In this sense, it characterizes the fundamental delay-control trade-off of the JDCC system under shared wireless resources. By minimizing the communication delay under different target control-performance values, the Pareto boundary can be expressed as
\begin{equation}
	\mathcal{B}_{\mathrm{Pareto}} = \left\{\left(\tau_U^\star(V_\infty),V_\infty\right): V_\infty \in [V_\infty^{\min},\infty)\right\},
	\label{eq:pareto_boundary}
\end{equation}
where $\tau_U^\star(V_\infty)$ denotes the minimum achievable communication delay for a given steady-state control variance $V_\infty$, and $V_\infty^{\min}$ is the finite control-only benchmark in \eqref{eq:V_min}. Therefore, $\mathcal{B}_{\mathrm{Pareto}}$ provides a complete characterization of the best achievable communication-delay and control-error trade-off in the considered JDCC system.

\subsubsection{Pareto-Boundary Point Formulation}
\label{subsubsec:pareto_reformulation}

To characterize the Pareto boundary, we adopt a constraint-based formulation. Specifically, for a given target control-performance value $V_{\mathrm{th}}$, each boundary point can be obtained by minimizing the communication delay subject to the control-performance constraint and the total downlink power constraint, i.e.,
\begin{subequations}\label{pro:pareto}
	\label{eq:pareto_opt}
	\begin{align}
		\min_{\mathbf{w}_D,\, \mathbf{w}_U} \quad
		& \tau_U = \frac{Q_U}{B_{\mathrm{dn}}\log_2\!\left(1+\frac{|\mathbf{h}_U^H \mathbf{w}_U|^2}{|\mathbf{h}_U^H \mathbf{w}_D|^2 + \sigma_{\mathrm{dn}}^2}\right)}
		\label{eq:pareto_obj}                                           \\
		\text{s.t.} \quad
		& V_\infty \le V_{\mathrm{th}},
		\label{eq:pareto_const0}                                        \\
		& \|\mathbf{w}_D\|^2 + \|\mathbf{w}_U\|^2 \le P_{\mathrm{dn}}.
		\label{eq:pareto_const2}
	\end{align}
\end{subequations}
By sweeping $V_{\mathrm{th}}$ over the feasible range $[V_\infty^{\min},\infty)$, the complete Pareto boundary can be obtained.

Directly obtaining the minimum delay $\tau_U^\star(V_\infty)$ from \eqref{eq:pareto_opt} is highly challenging, since the communication objective and the control-performance constraint are both nonlinearly coupled through the beamforming vectors $(\mathbf{w}_D,\mathbf{w}_U)$. To facilitate the subsequent analysis, we further reformulate the boundary-point problem in the SINR domain. Since $\tau_U = Q_U/[B_{\mathrm{dn}}\log_2(1+\Gamma_U)]$ is strictly decreasing in the communication SINR, minimizing $\tau_U$ is equivalent to maximizing $\Gamma_U$. Meanwhile, the control-performance constraint $V_\infty \le V_{\mathrm{th}}$ can be converted into an equivalent downlink SINR requirement for the CD, as in \emph{Lemma \ref{lem:gamma_inv}}.

\begin{lemma}[Control Threshold SINR]
  \label{lem:gamma_inv}
    For a given variance $V_{\mathrm{th}} > \sigma_w^2$ and a fixed uplink quality satisfying $S_\alpha > |a|^2 V_{\mathrm{th}}/(V_{\mathrm{th}}-\sigma_w^2)$, the constraint $V_\infty \le V_{\mathrm{th}}$ is equivalent to $\Gamma_\alpha \ge \Gamma_\alpha^{\mathrm{th}}(V_{\mathrm{th}},S_\alpha)$ and  $\mathrm{SINR}_D^{\mathrm{dn}} \ge \gamma_D^{\mathrm{th}}(V_{\mathrm{th}},S_\alpha)$, where
    \begin{subequations}
    	\begin{align}
    		\Gamma_\alpha^{\mathrm{th}}(V_{\mathrm{th}},S_\alpha)
    		= \frac{|a|^2\,V_{\mathrm{th}}(S_\alpha - 1)}
    		{S_\alpha(V_{\mathrm{th}} - \sigma_w^2) - |a|^2 V_{\mathrm{th}}},
    		\label{eq:Gamma_alpha_th} \\
    		 \gamma_D^{\mathrm{th}}(V_{\mathrm{th}},S_\alpha)
    		= \bigl(\Gamma_\alpha^{\mathrm{th}}(V_{\mathrm{th}},S_\alpha)\bigr)^{1/\alpha_{\mathrm{dn}}} - 1.
    		\label{eq:gamma_d_req}
    	\end{align}
    \end{subequations}
\end{lemma}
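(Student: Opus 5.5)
The plan is to start from the closed-form steady-state variance in Theorem~\ref{thm:main} and invert it with respect to $\Gamma_\alpha$ for fixed $S_\alpha$. For any $(S_\alpha,\Gamma_\alpha)$ satisfying the stability conditions \eqref{eq:stability_condition}, the proof of Theorem~\ref{thm:main} gives
\[
V_\infty=\frac{\sigma_w^2}{1-\rho},\qquad \rho\triangleq |a|^2\Bigl(\frac{1}{S_\alpha}+\frac{1}{\Gamma_\alpha}-\frac{1}{S_\alpha\Gamma_\alpha}\Bigr)\in[0,1).
\]
Outside the stability region I take $V_\infty=\infty$, so the constraint $V_\infty\le V_{\mathrm{th}}$ can only hold inside it. Inside the stability region, $V_\infty\le V_{\mathrm{th}}$ is equivalent to $\rho\le 1-\sigma_w^2/V_{\mathrm{th}}=(V_{\mathrm{th}}-\sigma_w^2)/V_{\mathrm{th}}$. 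This threshold lies in $(0,1)$ because $V_{\mathrm{th}}>\sigma_w^2$. Since any $\rho$ at or below this threshold is automatically $<1$, the stability requirement is absorbed into this single inequality, provided both sides are handled with the correct signs.

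Next I write $\rho=\frac{|a|^2}{S_\alpha}+\frac{|a|^2}{\Gamma_\alpha}\cdot\frac{S_\alpha-1}{S_\alpha}$. Because $S_\alpha\ge 1$, this is nonincreasing in $\Gamma_\alpha$, and strictly decreasing when $S_\alpha>1$. Note that $S_\alpha>|a|^2>1$ holds under the hypothesis. Multiplying the inequality by $S_\alpha V_{\mathrm{th}}\Gamma_\alpha>0$ and collecting the $\Gamma_\alpha$ terms gives
\[
\Gamma_\alpha\bigl[S_\alpha(V_{\mathrm{th}}-\sigma_w^2)-|a|^2V_{\mathrm{th}}\bigr]\ \ge\ |a|^2V_{\mathrm{th}}(S_\alpha-1).
\]
The hypothesis $S_\alpha>|a|^2V_{\mathrm{th}}/(V_{\mathrm{th}}-\sigma_w^2)$ is exactly what makes the bracket positive. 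Dividing by it then yields $\Gamma_\alpha\ge\Gamma_\alpha^{\mathrm{th}}(V_{\mathrm{th}},S_\alpha)$ as in \eqref{eq:Gamma_alpha_th}. Every step is reversible, which gives the equivalence in both directions.

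The one point needing care is the hidden stability issue in the converse direction. I must check that $\Gamma_\alpha\ge\Gamma_\alpha^{\mathrm{th}}$ implies that \eqref{eq:stability_condition} holds, so that the formula for $V_\infty$ is legitimate. This follows from the same algebra. Condition \eqref{eq:stab_up} holds because $S_\alpha>|a|^2V_{\mathrm{th}}/(V_{\mathrm{th}}-\sigma_w^2)>|a|^2$. Condition \eqref{eq:stab_dn}, i.e.\ $\rho<1$, holds because $\rho\le (V_{\mathrm{th}}-\sigma_w^2)/V_{\mathrm{th}}<1$. One could alternatively verify directly that $\Gamma_\alpha^{\mathrm{th}}>|a|^2(S_\alpha-1)/(S_\alpha-|a|^2)$, which also follows since the denominator of $\Gamma_\alpha^{\mathrm{th}}$ is smaller than $V_{\mathrm{th}}(S_\alpha-|a|^2)$.

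Finally, to convert to the SINR domain I use $\Gamma_\alpha=(1+\mathrm{SINR}_D^{\mathrm{dn}})^{\alpha_{\mathrm{dn}}}$ with $\alpha_{\mathrm{dn}}=B_{\mathrm{dn}}T_s^D>0$. The map $t\mapsto(1+t)^{\alpha_{\mathrm{dn}}}$ is strictly increasing on $t\ge 0$, so $\Gamma_\alpha\ge\Gamma_\alpha^{\mathrm{th}}$ is equivalent to $\mathrm{SINR}_D^{\mathrm{dn}}\ge(\Gamma_\alpha^{\mathrm{th}})^{1/\alpha_{\mathrm{dn}}}-1$, which is \eqref{eq:gamma_d_req}. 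The requirement can be vacuous if $\Gamma_\alpha^{\mathrm{th}}\le 1$, which is harmless.
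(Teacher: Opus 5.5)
Your proposal is correct and takes the same route as the paper. Both invert the closed-form $V_\infty$ of Theorem~\ref{thm:main} in $\Gamma_\alpha$ for fixed $S_\alpha$, then pass to the SINR domain through the strictly increasing map $t\mapsto(1+t)^{\alpha_{\mathrm{dn}}}$. Your version is more explicit than the paper's one-line argument: it shows where the hypothesis on $S_\alpha$ is used (positivity of the coefficient of $\Gamma_\alpha$), and it checks that $\Gamma_\alpha\ge\Gamma_\alpha^{\mathrm{th}}$ already implies the stability conditions \eqref{eq:stability_condition}. One small point: avoid naming the contraction factor $\rho$, since the paper uses $\rho$ for the channel correlation coefficient.
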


\begin{IEEEproof}
  For fixed $S_\alpha$, the steady-state variance $V_\infty$ in \eqref{eq:V_infinity} is strictly decreasing in $\Gamma_\alpha$ over the stable regime. Therefore, the constraint $V_\infty \le V_{\mathrm{th}}$ is equivalent to $\Gamma_\alpha \ge \Gamma_\alpha^{\mathrm{th}}$, where $\Gamma_\alpha^{\mathrm{th}}$ is obtained by setting $V_\infty = V_{\mathrm{th}}$ in \eqref{eq:V_infinity} and solving for $\Gamma_\alpha$. The corresponding SINR threshold then follows from $\Gamma_\alpha = (1+\mathrm{SINR}_D^{\mathrm{dn}})^{\alpha_{\mathrm{dn}}}$.
\end{IEEEproof}

By Lemma~\ref{lem:gamma_inv}, the problem in \eqref{eq:pareto_opt} can be equivalently reformulated as
\begin{subequations}
  \label{eq:pareto_opt_sinr}
  \begin{align}
    \max_{\mathbf{w}_D,\, \mathbf{w}_U} \quad
     & \Gamma_U
    = \frac{|\mathbf{h}_U^H \mathbf{w}_U|^2}{|\mathbf{h}_U^H \mathbf{w}_D|^2 + \sigma_{\mathrm{dn}}^2}
    \label{eq:pareto_obj_sinr}                                                       \\
    \text{s.t.} \quad
     & \frac{|\mathbf{h}_D^H \mathbf{w}_D|^2}{|\mathbf{h}_D^H \mathbf{w}_U|^2 + \sigma_{\mathrm{dn}}^2}
    \ge \gamma_D^{\mathrm{th}}(V_{\mathrm{th}},S_\alpha),
    \label{eq:pareto_const1}                                                         \\
     & \|\mathbf{w}_D\|^2 + \|\mathbf{w}_U\|^2 \le P_{\mathrm{dn}}.
    \label{eq:pareto_const2_sinr}
  \end{align}
\end{subequations}
This reformulation shows that, for a given target control-performance value $V_{\mathrm{th}}$, the corresponding boundary point can be characterized through an equivalent downlink control-SINR requirement $\gamma_D^{\mathrm{th}}(V_{\mathrm{th}},S_\alpha)$. Therefore, instead of parameterizing the Pareto boundary directly by the control variance, it is more convenient to parameterize it by the target downlink control SINR. Let $\gamma_D$ denote such a target control SINR. Then, each feasible value of $\gamma_D$ specifies one boundary point through the solution of \eqref{eq:pareto_opt_sinr}. To identify the feasible range of $\gamma_D$, note that the smallest control SINR corresponds to the minimum requirement for closed-loop stability, which is given by $\gamma_D^{\min} = 
\bigl(|a|^2(S_\alpha-1)/(S_\alpha-|a|^2)\bigr)^{1/\alpha_{\mathrm{dn}}}-1$. On the other hand, the largest feasible control SINR is attained when all downlink power is allocated to the CD, i.e., $\gamma_D^{\max}  =  \frac{P_{\mathrm{dn}}\|\mathbf{h}_D\|^2}{\sigma_{\mathrm{dn}}^2}$. Therefore, for each $\gamma_D \in (\gamma_D^{\min}, \gamma_D^{\max}]$ solving \eqref{eq:pareto_opt_sinr} yields the optimal communication SINR $\Gamma_U^\star(\gamma_D)$, and the Pareto boundary can be equivalently rewritten as
\begin{equation}
	\begin{aligned}
		&\mathcal{B}_{\mathrm{Pareto}} =  \Bigg\{  \Big(\frac{Q_U}{B_{\mathrm{dn}}\log_2\Big(1+\Gamma_U^\star(\gamma_D)\Big)},\,V_\infty(\gamma_D)\Big) \\ & : \gamma_D \in \Big(\! \bigl(|a|^2(S_\alpha\!-\!1)/(S_\alpha\!-\!|a|^2)\bigr)^{1/\alpha_{\mathrm{dn}}}\!-\!1,\frac{P_{\mathrm{dn}}\|\mathbf{h}_D\|^2}{\sigma_{\mathrm{dn}}^2}\Big]\!\Bigg\}.
		\label{eq:pareto_boundary_parametric}
	\end{aligned}
\end{equation}

\subsubsection{Analytical Derivation of the Pareto Boundary}
\label{subsubsec:optimal_pareto_boundary}

To complete the characterization in \eqref{eq:pareto_boundary_parametric}, it remains to determine the optimal communication SINR $\Gamma_U^\star(\gamma_D)$ for each feasible control-SINR target $\gamma_D$. This requires solving the problem~\eqref{eq:pareto_opt_sinr}. Although this problem is still nonconvex, its optimal structure can be characterized through the associated Lagrangian function and the corresponding KKT conditions. Specifically, for a given $\gamma_D$, the Lagrangian function with multipliers $\lambda \ge 0$ and $\nu \ge 0$ is given by
\begin{align}
	\mathcal{L}(\mathbf{w}_D,\mathbf{w}_U,\lambda,\nu)
	\!= & \frac{|\mathbf{h}_U^H\mathbf{w}_U|^2}{|\mathbf{h}_U^H\mathbf{w}_D|^2\!\! +\! \sigma_{\mathrm{dn}}^2}
	\!+\! \lambda\!\left(\!\frac{|\mathbf{h}_D^H\mathbf{w}_D|^2}{|\mathbf{h}_D^H\mathbf{w}_U|^2 \!\!+\! \sigma_{\mathrm{dn}}^2}\! -\! \gamma_D\!\!\right) \notag \\
	& - \nu\bigl(\|\mathbf{w}_D\|^2 + \|\mathbf{w}_U\|^2 - P_{\mathrm{dn}}\bigr).
	\label{eq:lagrangian}
\end{align}
Based on the resulting first-order optimality conditions, the optimal communication SINR $\Gamma_U^\star(\gamma_D)$ can be analytically characterized as follows.

\begin{theorem}[Analytical Characterization of the Pareto Boundary]
	\label{thm:bf_structure}
For any feasible control-SINR target $\gamma_D \in (\gamma_D^{\min}, \gamma_D^{\max}]$, the corresponding point on the Pareto boundary is given by
\begin{equation}
			\left(
			\frac{Q_U}{B_{\mathrm{dn}}\log_2\!\bigl(1+\Gamma_U^\star(\gamma_D)\bigr)},
			\,V_\infty(\gamma_D)
			\right),
			\label{eq:pareto_point_gamma}
\end{equation}	
where $\Gamma_U^\star(\gamma_D)$ denotes the maximum achievable communication SINR under the control-SINR requirement $\gamma_D$, and $V_\infty(\gamma_D)$ is obtained from \eqref{eq:V_infinity} by setting $\Gamma_\alpha = (1+\gamma_D)^{\alpha_{\mathrm{dn}}}$. 	More specifically, $\Gamma_U^\star(\gamma_D) = \frac{|\mathbf{h}_U^H\mathbf{w}_U^\star|^2}{|\mathbf{h}_U^H\mathbf{w}_D^\star|^2 + \sigma_{\mathrm{dn}}^2}$ is achieved by the beamforming vectors:
\begin{align}
	\mathbf{w}_D^\star & = \sqrt{p_D}\,\frac{(\mathbf{I}_M + \mu_U \mathbf{h}_U \mathbf{h}_U^H)^{-1}\mathbf{h}_D}{\|(\mathbf{I}_M + \mu_U \mathbf{h}_U \mathbf{h}_U^H)^{-1}\mathbf{h}_D\|},
	\label{eq:wD_opt} \\
	\mathbf{w}_U^\star & = \sqrt{p_U}\,\frac{(\mathbf{I}_M + \mu_D \mathbf{h}_D \mathbf{h}_D^H)^{-1}\mathbf{h}_U}{\|(\mathbf{I}_M + \mu_D \mathbf{h}_D \mathbf{h}_D^H)^{-1}\mathbf{h}_U\|},
	\label{eq:wU_opt}
\end{align}
where the necessary optimality structure of Pareto-boundary satisfy
\begin{subequations}
	\label{eq:kkt_system}
	\begin{align}
		p_D + p_U & = P_{\mathrm{dn}},
		\label{eq:kkt_power} \\
		\frac{|\mathbf{h}_D^H \mathbf{w}_D^\star|^2}{|\mathbf{h}_D^H \mathbf{w}_U^\star|^2 + \sigma_{\mathrm{dn}}^2}& = \gamma_D,
		\label{eq:kkt_sinr} \\
		\mu_U & = \frac{1}{\nu^\star}\frac{\Gamma_U^\star(\gamma_D)}{|\mathbf{h}_U^H\mathbf{w}_D^\star|^2 + \sigma_{\mathrm{dn}}^2},
		\label{eq:kkt_muU} \\
		\mu_D & = \frac{1}{\nu^\star}\frac{\lambda^\star\gamma_D}{|\mathbf{h}_D^H\mathbf{w}_U^\star|^2 + \sigma_{\mathrm{dn}}^2}.
		\label{eq:kkt_muD}
	\end{align}
\end{subequations}
\end{theorem}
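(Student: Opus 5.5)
The plan is to first reduce the Pareto-point claim to problem \eqref{eq:pareto_opt_sinr}, and then extract the beamformer structure from the stationarity conditions of the Lagrangian \eqref{eq:lagrangian}.

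\textbf{Reduction.} By Lemma~\ref{lem:gamma_inv}, the constraint $V_\infty\le V_{\mathrm{th}}$ is equivalent to $\mathrm{SINR}_D^{\mathrm{dn}}\ge\gamma_D$. Moreover, $\tau_U$ is strictly decreasing in $\Gamma_U$. Hence the boundary point indexed by $\gamma_D$ is exactly \eqref{eq:pareto_point_gamma}, with $V_\infty(\gamma_D)$ read off from \eqref{eq:V_infinity} at $\Gamma_\alpha=(1+\gamma_D)^{\alpha_{\mathrm{dn}}}$. This uses that $V_\infty$ is strictly decreasing in $\Gamma_\alpha$, as noted in the proof of Lemma~\ref{lem:gamma_inv}.

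\textbf{Active constraints.} Next I would show that both constraints hold with equality at an optimum.
\begin{itemize}
\item \emph{Power constraint, giving \eqref{eq:kkt_power}.} If $\|\mathbf{w}_D\|^2+\|\mathbf{w}_U\|^2<P_{\mathrm{dn}}$, add a small component along $\mathbf{h}_U$ projected onto the orthogonal complement of $\mathbf{h}_D$. This is possible generically since $M\ge2$ and $\mathbf{h}_U,\mathbf{h}_D$ are not parallel. The modification raises $|\mathbf{h}_U^H\mathbf{w}_U|^2$ without changing the CD SINR. (If $\mathbf{h}_U\parallel\mathbf{h}_D$, a direct scaling argument is used instead.)
\item \emph{SINR constraint, giving \eqref{eq:kkt_sinr}.} If the CD SINR is strictly above $\gamma_D$, shift a little power from $\mathbf{w}_D$ to $\mathbf{w}_U$. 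This lowers the interference $|\mathbf{h}_U^H\mathbf{w}_D|^2$ and raises the useful CU signal, so $\Gamma_U$ strictly increases while feasibility is preserved.
\end{itemize}

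\textbf{Stationarity.} Differentiate $\mathcal{L}$ with respect to $\mathbf{w}_U^*$. Writing $I_U=|\mathbf{h}_U^H\mathbf{w}_D|^2+\sigma_{\mathrm{dn}}^2$ and $I_D=|\mathbf{h}_D^H\mathbf{w}_U|^2+\sigma_{\mathrm{dn}}^2$, this gives
\[\frac{\mathbf{h}_U\mathbf{h}_U^H\mathbf{w}_U}{I_U}-\frac{\lambda|\mathbf{h}_D^H\mathbf{w}_D|^2}{I_D^2}\mathbf{h}_D\mathbf{h}_D^H\mathbf{w}_U-\nu\mathbf{w}_U=\mathbf{0}.\]
Using the active constraint \eqref{eq:kkt_sinr}, replace $|\mathbf{h}_D^H\mathbf{w}_D|^2/I_D$ by $\gamma_D$. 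This yields
\[\nu\bigl(\mathbf{I}_M+\mu_D\mathbf{h}_D\mathbf{h}_D^H\bigr)\mathbf{w}_U=\frac{\mathbf{h}_U^H\mathbf{w}_U}{I_U}\mathbf{h}_U,\]
with $\mu_D$ as in \eqref{eq:kkt_muD}. Since $\mathbf{I}_M+\mu_D\mathbf{h}_D\mathbf{h}_D^H$ is positive definite, $\mathbf{w}_U$ is parallel to $(\mathbf{I}_M+\mu_D\mathbf{h}_D\mathbf{h}_D^H)^{-1}\mathbf{h}_U$. The scalar phase is absorbed because $\Gamma_U$ and the CD SINR are phase invariant, and normalizing with power $p_U$ gives \eqref{eq:wU_opt}.

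The symmetric computation with respect to $\mathbf{w}_D^*$ gives
\[\frac{\lambda\mathbf{h}_D\mathbf{h}_D^H\mathbf{w}_D}{I_D}-\frac{|\mathbf{h}_U^H\mathbf{w}_U|^2}{I_U^2}\mathbf{h}_U\mathbf{h}_U^H\mathbf{w}_D-\nu\mathbf{w}_D=\mathbf{0}.\]
Substituting $|\mathbf{h}_U^H\mathbf{w}_U|^2/I_U=\Gamma_U^\star$ produces the matrix $\mathbf{I}_M+\mu_U\mathbf{h}_U\mathbf{h}_U^H$ with $\mu_U$ as in \eqref{eq:kkt_muU}. This yields \eqref{eq:wD_opt}.

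\textbf{Main obstacle.} The hard part is justifying $\nu^\star>0$ and the validity of the KKT conditions for this nonconvex problem. For $\nu^\star>0$, I would take the inner product of the stationarity equation with $\mathbf{w}_U$, which gives $\nu\|\mathbf{w}_U\|^2=\Gamma_U-\lambda\gamma_D|\mathbf{h}_D^H\mathbf{w}_U|^2/I_D$. I would then combine this with the $\mathbf{w}_D$ equation, or argue positivity directly from the power constraint being active. For regularity, I would check a constraint qualification (LICQ): the gradients of the two active constraints are linearly independent whenever $\mathbf{w}_D\neq\mathbf{0}$, which holds since $\gamma_D>0$. Under that qualification, the conditions stated in the theorem are necessary, which is all the theorem claims.
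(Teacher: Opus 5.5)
Your proposal is correct and follows the same route as the paper's proof. Like the paper, you make both constraints active, take stationarity of the Lagrangian \eqref{eq:lagrangian} in $\mathbf{w}_D$ and $\mathbf{w}_U$, and substitute the active values $\gamma_D$ and $\Gamma_U^\star$ to produce $\mu_D$ and $\mu_U$; you also justify activeness, $\nu^\star>0$ and a constraint qualification, which the paper simply asserts. Two small fixes are needed.

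\textbf{The constraint qualification.} The two active-constraint gradients are linearly dependent exactly when $\mathbf{w}_U=\mathbf{0}$ and $\mathbf{w}_D\parallel\mathbf{h}_D$, not only when $\mathbf{w}_D=\mathbf{0}$. This happens only at the endpoint $\gamma_D=\gamma_D^{\max}$, where the structure holds trivially with $p_U=0$.

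\textbf{Positivity of $\nu^\star$.} An active power constraint alone does not force $\nu^\star>0$. Instead, add your two inner-product identities to get $\nu^\star P_{\mathrm{dn}}=\sigma_{\mathrm{dn}}^2\bigl(\Gamma_U^\star/I_U+\lambda^\star\gamma_D/I_D\bigr)$. This is positive whenever $\Gamma_U^\star>0$, i.e., for $\gamma_D<\gamma_D^{\max}$.
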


\begin{IEEEproof}
	Please refer to Appendix~\ref{app:proof_bf_structure}.
\end{IEEEproof}

\subsection{JDCC Performance Regions Under MRT and ZF}
\label{subsec:linear_bf}

The Pareto boundary characterized in the previous subsection reveals the fundamental trade-off of JDCC systems, but its evaluation still requires solving the associated optimization problem and therefore does not yield simple closed-form expressions. To gain more explicit analytical insight, we next turn to two typical linear beamforming schemes, namely, MRT and ZF. Under these two schemes, the resulting delay-control performance regions can be characterized in closed form.

\subsubsection{Performance region under MRT}
\label{subsubsec:mrt}

Under the MRT scheme, the communication and control beams are aligned with the intended CU and CD channels, respectively, i.e.,
\begin{equation}
  \mathbf{w}_D^{\mathrm{MRT}} = \sqrt{p_D^{\mathrm{MRT}}}\;
  \frac{\mathbf{h}_D}{\|\mathbf{h}_D\|},
  \
  \mathbf{w}_U^{\mathrm{MRT}} = \sqrt{p_U^{\mathrm{MRT}}}\;
  \frac{\mathbf{h}_U}{\|\mathbf{h}_U\|}.
  \label{eq:bf_mrt}
\end{equation}
where $p_D^{\mathrm{MRT}}$ and $p_U^{\mathrm{MRT}}$ denote the power allocated to the CD and the CU, respectively. The following theorem provides the closed-form performance region under MRT.

\begin{theorem}
  \label{thm:mrt_pareto}
  Under the MRT scheme~\eqref{eq:bf_mrt}, the JDCC performance region is given by
  \begin{equation}
    \begin{aligned}
       & \mathcal{B}_{\mathrm{MRT}} =   \Bigg\{\bigg(\frac{Q_U}{B_{\mathrm{dn}}\log_2 \left(1
         +\! \frac{p_U^{\mathrm{MRT}}\,\|\mathbf{h}_U\|^2}
      {p_D^{\mathrm{MRT}} \rho \|\mathbf{h}_U\|^2 + \sigma_{\mathrm{dn}}^2} \!\right)}\, , \\ & \quad\frac{\sigma_w^2\,S_\alpha\,(1+\gamma_D)^{\alpha_{\mathrm{dn}}}}{S_\alpha(1+\gamma_D)^{\alpha_{\mathrm{dn}}}-|a|^2(S_\alpha+(1+\gamma_D)^{\alpha_{\mathrm{dn}}}-1)}\bigg)                                       \\
       & \quad : \gamma_D \!\in\! \Big(\! \bigl(|a|^2(S_\alpha\!-\!1)/(S_\alpha\!-\!|a|^2)\bigr)^{1/\alpha_{\mathrm{dn}}}\!\!-\!1,\frac{P_{\mathrm{dn}}\|\mathbf{h}_D\|^2}{\sigma_{\mathrm{dn}}^2}\Big]\Bigg\},
    \end{aligned}
  \end{equation}
  where the optimal power allocation is
  \begin{equation}
    p_D^{\mathrm{MRT}}
    = \frac{\gamma_D\bigl(P_{\mathrm{dn}}\,\rho\,\|\mathbf{h}_D\|^2 + \sigma_{\mathrm{dn}}^2\bigr)}
    {\|\mathbf{h}_D\|^2\bigl(1 + \gamma_D\,\rho\bigr)}, \
    p_U^{\mathrm{MRT}} = P_{\mathrm{dn}} - p_D^{\mathrm{MRT}},
    \label{eq:pD_mrt}
  \end{equation}
  and $\rho =  	\frac{|\mathbf{h}_D^H \mathbf{h}_U|^2}
  {\|\mathbf{h}_D\|^2\,\|\mathbf{h}_U\|^2}
  \in [0,1] $ denotes the channel correlation coefficient.
\end{theorem}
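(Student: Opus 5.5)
Under the MRT structure~\eqref{eq:bf_mrt}, the only remaining design variables are the two power levels $(p_D^{\mathrm{MRT}},p_U^{\mathrm{MRT}})$. The plan is to compute both SINRs explicitly in terms of these powers and $\rho$. For each control-SINR target $\gamma_D$, we then solve the reduced version of \eqref{eq:pareto_opt_sinr} in closed form. Finally, we map the result to the $(\tau_U,V_\infty)$ plane using Lemma~\ref{lem:gamma_inv} and Theorem~\ref{thm:main}.

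\textbf{Step 1: the two SINRs.} Substituting the beams into \eqref{eq:sinr_U} and \eqref{eq:sinr_D} and using $|\mathbf{h}_U^H\mathbf{h}_D|^2=\rho\|\mathbf{h}_D\|^2\|\mathbf{h}_U\|^2$, the cross terms become
\[
|\mathbf{h}_U^H\mathbf{w}_D|^2=p_D\rho\|\mathbf{h}_U\|^2,\qquad |\mathbf{h}_D^H\mathbf{w}_U|^2=p_U\rho\|\mathbf{h}_D\|^2 .
\]
Hence
\[
\Gamma_U=\frac{p_U\|\mathbf{h}_U\|^2}{p_D\rho\|\mathbf{h}_U\|^2+\sigma_{\mathrm{dn}}^2},\qquad \mathrm{SINR}_D^{\mathrm{dn}}=\frac{p_D\|\mathbf{h}_D\|^2}{p_U\rho\|\mathbf{h}_D\|^2+\sigma_{\mathrm{dn}}^2}.
\]
The first expression is exactly the argument of the logarithm in the stated region.

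\textbf{Step 2: optimal power allocation.} We first argue that the sum-power constraint is tight. If $p_D+p_U<P_{\mathrm{dn}}$, scale both powers up by a factor $t>1$. Each SINR has the form $tA/(tB+\sigma^2)$, which is nondecreasing in $t$, so neither SINR decreases. Hence $p_U=P_{\mathrm{dn}}-p_D$ without loss of optimality.

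Next, $\Gamma_U$ is strictly decreasing in $p_D$ on $[0,P_{\mathrm{dn}}]$, since its numerator decreases and its denominator increases. $\mathrm{SINR}_D^{\mathrm{dn}}$ is strictly increasing in $p_D$. Therefore the optimum meets the control constraint \eqref{eq:pareto_const1} with equality, i.e., $\mathrm{SINR}_D^{\mathrm{dn}}=\gamma_D$.

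This equality is linear in $p_D$:
\[
p_D\|\mathbf{h}_D\|^2=\gamma_D\bigl((P_{\mathrm{dn}}-p_D)\rho\|\mathbf{h}_D\|^2+\sigma_{\mathrm{dn}}^2\bigr).
\]
Solving it gives \eqref{eq:pD_mrt}. We then check that $p_D^{\mathrm{MRT}}\le P_{\mathrm{dn}}$. This inequality reduces to $\gamma_D\sigma_{\mathrm{dn}}^2\le P_{\mathrm{dn}}\|\mathbf{h}_D\|^2$, i.e., $\gamma_D\le\gamma_D^{\max}$. At the endpoint $\gamma_D=\gamma_D^{\max}$ we get $p_U=0$, which recovers Paradigm~B.

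\textbf{Step 3: mapping to the performance region.} Set $\Gamma_\alpha=(1+\gamma_D)^{\alpha_{\mathrm{dn}}}$ and substitute into \eqref{eq:V_infinity}; this gives the second coordinate. The lower end of the range of $\gamma_D$ is the stability threshold \eqref{eq:stab_dn} expressed in SINR form, so $V_\infty$ is finite on the whole open-closed interval. As $\gamma_D$ sweeps this interval, the parametrized curve traces the MRT boundary.

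\textbf{Main obstacle.} The delicate step is Step 2. Both the monotonicity argument for tightness of the power constraint and the check $p_D\in[0,P_{\mathrm{dn}}]$ must hold over the full range of $\gamma_D$, including the degenerate cases $\rho=0$ and $\rho=1$. The algebra itself is routine.
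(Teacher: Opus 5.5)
Your proposal is correct and follows the paper's proof (Appendix B) essentially step by step. You substitute the MRT beams to write both SINRs in terms of $\rho$, use their opposite monotonicity in $p_D$ on the budget line to force $\mathrm{SINR}_D^{\mathrm{dn}}=\gamma_D$, solve the resulting linear equation for $p_D^{\mathrm{MRT}}$, and map to $(\tau_U,V_\infty)$ via \eqref{eq:V_infinity}. Your two extra checks are small rigor additions that the paper takes for granted: the scaling argument that makes the sum-power constraint tight, and the verification that $p_D^{\mathrm{MRT}}\le P_{\mathrm{dn}}$ is equivalent to $\gamma_D\le\gamma_D^{\max}$. The degenerate cases $\rho\in\{0,1\}$ that you flag cause no difficulty, since the monotonicity in $p_D$ still holds there.
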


\begin{IEEEproof}
	Please refer to Appendix~\ref{app:proof_mrt_pareto}.
\end{IEEEproof}

\subsubsection{Performance Region Under ZF}
\label{subsubsec:zf}

Under the ZF scheme, the communication and control beams are projected onto the null spaces of the CD and CU channels, respectively, i.e.,
\begin{equation}
	\mathbf{w}_D^{\mathrm{ZF}} = \sqrt{p_D^{\mathrm{ZF}}}\,\frac{\mathbf{P}_U^{\perp}\mathbf{h}_D}{\|\mathbf{P}_U^{\perp}\mathbf{h}_D\|},
	\
	\mathbf{w}_U^{\mathrm{ZF}} = \sqrt{p_U^{\mathrm{ZF}}}\,\frac{\mathbf{P}_D^{\perp}\mathbf{h}_U}{\|\mathbf{P}_D^{\perp}\mathbf{h}_U\|},
	\label{eq:zf_directions}
\end{equation}
where
\begin{equation}
	\mathbf{P}_U^{\perp} \triangleq \mathbf{I}_M - \frac{\mathbf{h}_U\mathbf{h}_U^H}{\|\mathbf{h}_U\|^2},
	\qquad
	\mathbf{P}_D^{\perp} \triangleq \mathbf{I}_M - \frac{\mathbf{h}_D\mathbf{h}_D^H}{\|\mathbf{h}_D\|^2},
	\label{eq:proj_def}
\end{equation}
and $p_D^{\mathrm{ZF}}$ and $p_U^{\mathrm{ZF}}$ denote the power allocated to the CD and the CU, respectively. The following theorem provides the closed-form performance region under ZF.

\begin{theorem}
	\label{thm:zf_pareto}
 	 Under the ZF beamforming scheme \eqref{eq:zf_directions}, the JDCC performance region is given by
	\begin{equation}
		\begin{aligned}
			& \mathcal{B}_{\mathrm{ZF}}   = \Bigg\{ \bigg(\frac{Q_U}{B_{\mathrm{dn}}\log_2\left(1+\frac{p_U^{\mathrm{ZF}}\|\mathbf{h}_U\|^2(1-\rho)}{\sigma_{\mathrm{dn}}^2}\right)},\, \\ &  \frac{\sigma_w^2\,S_\alpha\,(1+\gamma_D)^{\alpha_{\mathrm{dn}}}}{S_\alpha(1+\gamma_D)^{\alpha_{\mathrm{dn}}}-|a|^2(S_\alpha+(1+\gamma_D)^{\alpha_{\mathrm{dn}}}-1)} \bigg) \\
			&: \! \gamma_D \!\!\in\!\! \Big(\! \bigl(|a|^2\!(S_\alpha\!-\!1)/(S_\alpha\!-\!|a|^2)\bigr)^{1/\alpha_{\mathrm{dn}}}\!\!-\!1,\!\frac{(1\!-\!\rho)P_{\mathrm{dn}}\!\|\mathbf{h}_D\|^2}{\sigma_{\mathrm{dn}}^2}\!\Big]\!\Bigg\}\!,
		\end{aligned}
	\end{equation}
	where
	\begin{equation}
		p_U^{\mathrm{ZF}} = P_{\mathrm{dn}} -\frac{\gamma_D\,\sigma_{\mathrm{dn}}^2}
		{\|\mathbf{h}_D\|^2(1-\rho)}.
		\label{eq:pU_zf}
	\end{equation}
\end{theorem}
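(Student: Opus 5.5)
The plan mirrors the MRT case in Theorem~\ref{thm:mrt_pareto}. First I will evaluate the two downlink SINRs under the ZF directions \eqref{eq:zf_directions}. Since $\mathbf{P}_U^{\perp}\mathbf{h}_U=\mathbf{0}$ and $\mathbf{P}_D^{\perp}\mathbf{h}_D=\mathbf{0}$, and both projectors are Hermitian and idempotent, the cross-interference terms vanish: $\mathbf{h}_U^H\mathbf{w}_D^{\mathrm{ZF}}=0$ and $\mathbf{h}_D^H\mathbf{w}_U^{\mathrm{ZF}}=0$. The desired-signal gains then reduce to projected norms, e.g. $|\mathbf{h}_D^H\mathbf{P}_U^{\perp}\mathbf{h}_D|^2/\|\mathbf{P}_U^{\perp}\mathbf{h}_D\|^2=\|\mathbf{P}_U^{\perp}\mathbf{h}_D\|^2$. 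Expanding the projector gives
\[
\|\mathbf{P}_U^{\perp}\mathbf{h}_D\|^2=\|\mathbf{h}_D\|^2-\frac{|\mathbf{h}_U^H\mathbf{h}_D|^2}{\|\mathbf{h}_U\|^2}=\|\mathbf{h}_D\|^2(1-\rho),
\]
and symmetrically $\|\mathbf{P}_D^{\perp}\mathbf{h}_U\|^2=\|\mathbf{h}_U\|^2(1-\rho)$. It follows that $\mathrm{SINR}_D^{\mathrm{dn}}=p_D^{\mathrm{ZF}}\|\mathbf{h}_D\|^2(1-\rho)/\sigma_{\mathrm{dn}}^2$ and $\Gamma_U=p_U^{\mathrm{ZF}}\|\mathbf{h}_U\|^2(1-\rho)/\sigma_{\mathrm{dn}}^2$. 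Here I assume $\rho<1$, so that the normalizations in \eqref{eq:zf_directions} are well defined.

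Next I will parameterize the region by the control SINR target $\gamma_D$, exactly as in \eqref{eq:pareto_boundary_parametric}. By Lemma~\ref{lem:gamma_inv}, the control variance is a strictly decreasing function of $\mathrm{SINR}_D^{\mathrm{dn}}$, and $\Gamma_U$ is increasing in $p_U^{\mathrm{ZF}}$. The delay-minimizing allocation for a given $\gamma_D$ is therefore the one that meets $\mathrm{SINR}_D^{\mathrm{dn}}=\gamma_D$ with equality and spends the full budget. This gives $p_D^{\mathrm{ZF}}=\gamma_D\sigma_{\mathrm{dn}}^2/(\|\mathbf{h}_D\|^2(1-\rho))$ and $p_U^{\mathrm{ZF}}=P_{\mathrm{dn}}-p_D^{\mathrm{ZF}}$, which is \eqref{eq:pU_zf}. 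Substituting $p_U^{\mathrm{ZF}}$ into \eqref{eq:cu_delay} yields the delay coordinate. Substituting $\Gamma_\alpha=(1+\gamma_D)^{\alpha_{\mathrm{dn}}}$ into \eqref{eq:V_infinity} of Theorem~\ref{thm:main} yields the variance coordinate.

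Finally, I will establish the range of $\gamma_D$. The lower end is the stability threshold \eqref{eq:stab_dn} of Theorem~\ref{thm:main}, rewritten in the SINR domain, which gives $\gamma_D^{\min}$; the interval is open there because $V_\infty\to\infty$ at that point. The upper end comes from $p_U^{\mathrm{ZF}}\ge 0$, i.e. $p_D^{\mathrm{ZF}}\le P_{\mathrm{dn}}$, which gives $\gamma_D\le(1-\rho)P_{\mathrm{dn}}\|\mathbf{h}_D\|^2/\sigma_{\mathrm{dn}}^2$. Two consistency checks are needed: the interval must be nonempty, which requires the assumed uplink quality and a $\rho$ small enough, and the uplink condition \eqref{eq:stab_up} is taken as given from the standing assumption of Section~\ref{sec:beamforming}.

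The computations are routine. The only step needing care is the projector identity together with the argument that the constraint is tight at optimum. Tightness holds because ZF decouples the two links: $\Gamma_U$ depends only on $p_U^{\mathrm{ZF}}$ and the control SINR only on $p_D^{\mathrm{ZF}}$, so the problem is a monotone one-dimensional allocation.
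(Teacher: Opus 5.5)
Your proposal is correct and follows the same route as the paper's proof. You compute the ZF gains $p_D^{\mathrm{ZF}}\|\mathbf{h}_D\|^2(1-\rho)$ and $p_U^{\mathrm{ZF}}\|\mathbf{h}_U\|^2(1-\rho)$ with vanishing cross terms, use the opposite monotonicity of the two SINRs in $p_D^{\mathrm{ZF}}$ under a full-power split to make the control-SINR constraint tight, and substitute into \eqref{eq:cu_delay} and \eqref{eq:V_infinity}. The explicit projector computation and your derivation of the $\gamma_D$ range are details the paper leaves implicit; the range derivation includes the correct observation that, under ZF, nonemptiness needs $\rho$ small enough, which the standing assumption of Section~\ref{sec:beamforming} does not guarantee by itself.
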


\begin{IEEEproof}
	Please refer to Appendix~\ref{app:proof_zf_pareto}.
\end{IEEEproof}

Based on the closed-form JDCC performance regions under MRT and ZF beamforming, we next compare these two scheme-induced trade-off curves with the Pareto boundary.

\begin{remark}
	\label{rem:pareto_mrt_zf_comparison}
	For any feasible control-SINR $\gamma_D$, the Pareto boundary is obtained by optimizing over all feasible beamforming designs, whereas MRT and ZF correspond to two specific beamforming schemes. Therefore, the MRT and ZF curves lie on or outside the Pareto boundary.
	
	$\bullet$ When $\rho = 0$, the CU and CD channels become orthogonal, and thus MRT and ZF coincide and both achieve the global Pareto boundary.
	
	$\bullet$ When $\rho > 0$, the relative performance of MRT and ZF depends on the available downlink power $P_{\mathrm{dn}}$. In particular, within the interval:
	\begin{equation}
		\frac{\gamma_D \sigma_{\mathrm{dn}}^2}{\|\mathbf{h}_D\|^2}
		\le
		P_{\mathrm{dn}}
		<
		\frac{\gamma_D \sigma_{\mathrm{dn}}^2}{\|\mathbf{h}_D\|^2(1-\rho)},
		\label{eq:mrt_only_region_remark}
	\end{equation}
	ZF is infeasible, whereas MRT remains feasible. When both schemes are feasible, their power threshold is given by
	\begin{equation}
		P_{\mathrm{dn}}^{\mathrm{MRT/ZF}}
		=
		\frac{-c_1 + \sqrt{c_1^2 - 4 c_2 c_0}}{2 c_2},
	\end{equation}
	where
	\begin{align*}
	& c_2 \! = \gamma_D \rho^2 (1-\rho) \|\mathbf{h}_D\|^4 \|\mathbf{h}_U\|^2,                                                                                                                           \\
	& c_1 \! =\! \!\sigma_{\mathrm{dn}}^2\! \|\mathbf{h}_D\|^2\! \rho \bigl[\! \gamma_D \|\mathbf{h}_U\|^2 (1\! -\! \rho \!-\! \gamma_D\rho) \!\!+\!\! \|\mathbf{h}_D\|^2 (\gamma_D\! -\! \gamma_D\rho \!-\! 1)\! \bigr]\!, \\
	& c_0 \! = - \gamma_D^2 (\sigma_{\mathrm{dn}}^2)^2 \rho (\|\mathbf{h}_D\|^2 + \|\mathbf{h}_U\|^2).
	\end{align*}
	Specifically, MRT is better when $P_{\mathrm{dn}} < P_{\mathrm{dn}}^{\mathrm{MRT/ZF}}$, whereas ZF is better when $P_{\mathrm{dn}} > P_{\mathrm{dn}}^{\mathrm{MRT/ZF}}$.
\end{remark}

\section{Outage Analysis of JDCC Systems}
\label{sec:outage}

The Pareto boundary characterized in the previous section explicitly reveals the optimal communication-control trade-off of the dual-functional JDCC system. This section further investigates the corresponding outage behavior to evaluate the reliability of dual-functional JDCC operation. To capture the effect of instantaneous CSI on the outage probability, the channel vectors are modeled as $\mathbf{h}_D \sim \mathcal{CN}(\mathbf{0},\beta_D\mathbf{I}_M)$ and $\mathbf{h}_U \sim \mathcal{CN}(\mathbf{0},\beta_U\mathbf{I}_M)$, where $\beta_D > 0$ and $\beta_U > 0$ denote the large-scale path-loss coefficients. Based on this model, we first characterize the outage probabilities of the two single-function benchmark paradigms, and then extend the analysis to the joint outage probability of the dual-functional system.

\subsection{Single-Function Outage Probability}
\label{subsec:single_outage}

We first characterize the outage probabilities of the two single-function benchmark paradigms, namely the communication-only and control-only schemes. These two benchmark results serve as basic reliability limits and provide a useful foundation for the subsequent joint outage analysis of the dual-functional JDCC system.

\subsubsection{Paradigm A: Communication-Only Outage Probability}

Under Paradigm A, the BS adopts the communication-only MRT scheme in \eqref{eq:mrt_comm}. For a given communication-delay threshold $\tau_{\mathrm{req}} > 0$, the communication-only outage probability is defined as
\begin{equation}
	P_{\mathrm{out}}^{\mathrm{com}}
	\triangleq
	\Pr\!\left(\tau_U^{\mathrm{com}} > \tau_{\mathrm{req}}\right),
	\label{eq:comm_outage_def}
\end{equation}
where $\tau_U^{\mathrm{com}}$ is given in \eqref{eq:Du_min}. Let $G_U  = \|\mathbf{h}_U\|^2/\beta_U \sim \mathrm{Gamma}(M,1)$. Then, the outage event is equivalent to $G_U < \eta_U$, where $\eta_U = \frac{\gamma_U^{\mathrm{req}}\sigma_{\mathrm{dn}}^2}{P_{\mathrm{dn}}\beta_U}$ and $\gamma_U^{\mathrm{req}} = 2^{Q_U/(B_{\mathrm{dn}}\tau_{\mathrm{req}})} - 1$. Hence,
\begin{equation}
	P_{\mathrm{out}}^{\mathrm{com}}
	=
	F_G(\eta_U)
	=
	1 - e^{-\eta_U}\sum_{k=0}^{M-1}\frac{\eta_U^k}{k!}.
	\label{eq:comm_outage}
\end{equation}

\subsubsection{Paradigm B: Control-Only Outage Probability}
\label{subsec:ctrl_outage}

Under Paradigm B, the BS adopts the control-only MRT scheme in \eqref{eq:mrt_ctrl}. For a given control-error threshold $V_{\mathrm{req}} > \sigma_w^2$, the control-only outage probability is defined as
\begin{equation}
	P_{\mathrm{out}}^{\mathrm{ctrl}}
	\triangleq
	\Pr\!\left(V_\infty \ge V_{\mathrm{req}}\right).
	\label{eq:ctrl_outage_vth}
\end{equation}
Using the steady-state variance expression in \eqref{eq:V_infinity}, this outage event can be equivalently written as
\begin{equation}
	P_{\mathrm{out}}^{\mathrm{ctrl}}
	=
	1 - \Pr\!\left(S_\alpha > \frac{|a|^2V_{\mathrm{req}}}{V_{\mathrm{req}}-\sigma_w^2},\ \widetilde{\gamma}_D \ge \gamma_D^{\mathrm{req}}(V_{\mathrm{req}},S_\alpha)\right),
	\label{eq:ctrl_outage_uplink_downlink}
\end{equation}
where $\widetilde{\gamma}_D$ denotes the instantaneous downlink control SINR and $\gamma_D^{\mathrm{req}}(V_{\mathrm{req}},S_\alpha)$ is the same expression given in \eqref{eq:gamma_d_req}. However, this outage probability depends jointly on the uplink state-reporting quality and the downlink control-delivery quality. In particular, the corresponding uplink quality $S_\alpha$ and instantaneous downlink control SINR $\widetilde{\gamma}_D$ are coupled through the same channel realization $\|\mathbf{h}_D\|^2$. Therefore, to characterize the resulting joint outage event, it is convenient to introduce the normalized channel gain:
\begin{equation}
	G_D = \frac{\|\mathbf{h}_D\|^2}{\beta_D} \sim \mathrm{Gamma}(M,1).\end{equation}
In this way, the two coupled quantities in \eqref{eq:ctrl_outage_uplink_downlink} can be rewritten in terms of $G_D$ as 
\begin{equation}
	S_\alpha = \left(1 + \frac{P_{\mathrm{up}}\beta_D}{\sigma_{\mathrm{up}}^2}G_D\right)^{\alpha_{\mathrm{up}}}, \ \widetilde{\gamma}_D
	=
	\frac{P_{\mathrm{dn}}\beta_D}{\sigma_{\mathrm{dn}}^2}G_D.
\end{equation}
Note that the required downlink control SINR $\gamma_D^{\mathrm{req}}(V_{\mathrm{req}},S_\alpha)$ in \eqref{eq:gamma_d_req} also depends on $S_\alpha$, and is therefore implicitly determined by the same random variable $G_D$. Accordingly, the outage event in \eqref{eq:ctrl_outage_uplink_downlink} is completely determined by the single random variable $G_D$, such as
\begin{equation}
	P_{\mathrm{out}}^{\mathrm{ctrl}}
	=
	1 - \Pr\!\left(G_D > \eta_V,\ \bar{\gamma}_d G_D \ge \gamma_D^{\mathrm{req}}(G_D)\right),
	\label{eq:ctrl_outage_general_event}
\end{equation}
where $\eta_V = \frac{1}{\bar{\gamma}_{\mathrm{up}}}
\left[\left( \frac{|a|^2V_{\mathrm{req}}}{V_{\mathrm{req}}-\sigma_w^2}
\right)^{1/\alpha_{\mathrm{up}}} - 1
\right]$, $\gamma_D^{\mathrm{req}}(G_D)
=
\left(
\frac{|a|^2V_{\mathrm{req}}(S_\alpha-1)}
{S_\alpha(V_{\mathrm{req}}-\sigma_w^2)-|a|^2V_{\mathrm{req}}}
\right)^{1/\alpha_{\mathrm{dn}}}
- 1$, $S_\alpha = (1 + \bar{\gamma}_{\mathrm{up}} G_D )^{\alpha_{\mathrm{up}}}$, $\bar{\gamma}_{\mathrm{up}} = \frac{P_{\mathrm{up}}\beta_D}{\sigma_{\mathrm{up}}^2} $, and $\bar{\gamma}_d =  \frac{P_{\mathrm{dn}}\beta_D}{\sigma_{\mathrm{dn}}^2}$.

Therefore, the key step is to identify the channel-gain threshold above which both the uplink feasibility condition and the downlink control-SINR requirement are simultaneously satisfied. Based on this observation, the control-only outage probability under MRT can be characterized as follows.

\begin{theorem}[Control-Only Outage under MRT]
	\label{thm:ctrl_outage_mrt}
	The control-only outage probability under MRT is given by
	\begin{equation}
		P_{\mathrm{out}}^{\mathrm{ctrl}}
		=
		1 - e^{-\eta_{\mathrm{ctrl}}}\sum_{k=0}^{M-1}\frac{\eta_{\mathrm{ctrl}}^k}{k!},
		\label{eq:ctrl_outage_mrt}
	\end{equation}
	where $\eta_{\mathrm{ctrl}} \in (\eta_V,\infty)$ is the unique solution to
	\begin{equation}
		\bar{\gamma}_d \eta_{\mathrm{ctrl}} = \gamma_D^{\mathrm{req}}(\eta_{\mathrm{ctrl}}).
		\label{eq:eta_ctrl_eq}
	\end{equation}
\end{theorem}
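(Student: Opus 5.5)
The plan is to reduce the event in \eqref{eq:ctrl_outage_general_event} to a single threshold event $\{G_D \ge \eta_{\mathrm{ctrl}}\}$ and then use the Gamma$(M,1)$ CDF, exactly as in \eqref{eq:comm_outage}. Define on $(\eta_V,\infty)$ the function
\[
f(g) \triangleq \bar{\gamma}_d\, g - \gamma_D^{\mathrm{req}}(g),
\]
where $\gamma_D^{\mathrm{req}}(g)$ is the expression from \eqref{eq:gamma_d_req} with $S_\alpha = (1+\bar{\gamma}_{\mathrm{up}} g)^{\alpha_{\mathrm{up}}}$. By construction, $g > \eta_V$ is equivalent to $S_\alpha > |a|^2 V_{\mathrm{req}}/(V_{\mathrm{req}}-\sigma_w^2)$. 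On this set the denominator $S_\alpha(V_{\mathrm{req}}-\sigma_w^2)-|a|^2V_{\mathrm{req}}$ is positive, so $\gamma_D^{\mathrm{req}}(g)$ is well defined and finite. The success event is therefore $\{G_D>\eta_V,\ f(G_D)\ge 0\}$.

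The key steps are monotonicity and boundary behaviour of $f$.

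\emph{Monotonicity.} Write $\Gamma_\alpha^{\mathrm{th}} = \frac{|a|^2V_{\mathrm{req}}(S-1)}{S(V_{\mathrm{req}}-\sigma_w^2)-|a|^2V_{\mathrm{req}}}$. Its derivative in $S$ has numerator proportional to
\[
|a|^2V_{\mathrm{req}}\bigl[S(V_{\mathrm{req}}-\sigma_w^2)-|a|^2V_{\mathrm{req}} - (S-1)(V_{\mathrm{req}}-\sigma_w^2)\bigr] = |a|^2V_{\mathrm{req}}\bigl[(V_{\mathrm{req}}-\sigma_w^2)-|a|^2V_{\mathrm{req}}\bigr],
\]
which is negative because $|a|>1$. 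Hence $\Gamma_\alpha^{\mathrm{th}}$ is strictly decreasing in $S$. Since $S$ is strictly increasing in $g$ and $x\mapsto x^{1/\alpha_{\mathrm{dn}}}-1$ is increasing, $\gamma_D^{\mathrm{req}}(g)$ is strictly decreasing in $g$. It follows that $f$ is continuous and strictly increasing on $(\eta_V,\infty)$.

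\emph{Boundary limits.} As $g\downarrow\eta_V$, the denominator tends to $0^+$ while the numerator stays positive (because $S>1$), so $\gamma_D^{\mathrm{req}}\to+\infty$ and $f\to-\infty$. As $g\to\infty$, $\gamma_D^{\mathrm{req}}(g)$ tends to the finite limit $\bigl(|a|^2V_{\mathrm{req}}/(V_{\mathrm{req}}-\sigma_w^2)\bigr)^{1/\alpha_{\mathrm{dn}}}-1$, while $\bar{\gamma}_d g\to\infty$, so $f\to+\infty$. The intermediate value theorem together with strict monotonicity then gives a unique root $\eta_{\mathrm{ctrl}}\in(\eta_V,\infty)$ of \eqref{eq:eta_ctrl_eq}. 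Moreover, $f(g)\ge0$ holds if and only if $g\ge\eta_{\mathrm{ctrl}}$.

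The success event is therefore exactly $\{G_D\ge\eta_{\mathrm{ctrl}}\}$, since $\eta_{\mathrm{ctrl}}>\eta_V$ makes the first condition automatic. Substituting into \eqref{eq:ctrl_outage_general_event} gives $P_{\mathrm{out}}^{\mathrm{ctrl}} = \Pr(G_D<\eta_{\mathrm{ctrl}}) = F_G(\eta_{\mathrm{ctrl}})$. Because the distribution is continuous, the boundary point carries no probability mass. The Gamma$(M,1)$ CDF then yields \eqref{eq:ctrl_outage_mrt}.

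The main obstacle is the monotonicity of $\gamma_D^{\mathrm{req}}$ in $G_D$, and in particular the sign computation above, which relies on $|a|^2>1$. I would also check separately that the equivalence in \eqref{eq:ctrl_outage_uplink_downlink} matches Lemma~\ref{lem:gamma_inv}, so that the boundary case $V_\infty = V_{\mathrm{req}}$ is handled consistently under the strict inequality in \eqref{eq:ctrl_outage_vth}. Again, this case has measure zero and does not affect the result.
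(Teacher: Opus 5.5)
Your proposal is correct and follows essentially the same route as the paper: $\gamma_D^{\mathrm{req}}(G_D)$ is strictly decreasing while $\bar{\gamma}_d G_D$ is strictly increasing, and the limits at $\eta_V^{+}$ (infinite) and at $\infty$ (finite) give a unique crossing $\eta_{\mathrm{ctrl}}$, so the success event reduces to $\{G_D \ge \eta_{\mathrm{ctrl}}\}$ and the $\mathrm{Gamma}(M,1)$ CDF gives the result. Your explicit derivative computation, with numerator proportional to $(V_{\mathrm{req}}-\sigma_w^2)-|a|^2V_{\mathrm{req}}<0$, usefully fills in the monotonicity step that the paper only asserts as ``it can be verified.''
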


\begin{IEEEproof}
	From \eqref{eq:ctrl_outage_general_event}, the success event under control-only MRT is $G_D > \eta_V$ and $\bar{\gamma}_d G_D \ge \gamma_D^{\mathrm{req}}(G_D)$, which can be rewritten as, for $x>\eta_V$,
	\begin{equation}
		\gamma_D^{\mathrm{req}}(x)
		=
		\left(
		\frac{|a|^2V_{\mathrm{req}}\bigl((1 + \bar{\gamma}_{\mathrm{up}}x)^{\alpha_{\mathrm{up}}}-1\bigr)}
		{(1 + \bar{\gamma}_{\mathrm{up}}x)^{\alpha_{\mathrm{up}}}(V_{\mathrm{req}}-\sigma_w^2)-|a|^2V_{\mathrm{req}}}
		\right)^{1/\alpha_{\mathrm{dn}}}
		- 1,
	\end{equation}
	It can be verified that $\gamma_D^{\mathrm{req}}(x)$ is continuous and strictly decreasing for $x>\eta_V$, whereas $\bar{\gamma}_d x$ is strictly increasing in $x$. Moreover, $\gamma_D^{\mathrm{req}}(x)\to\infty$ as $x\to\eta_V^{+}$, while $\gamma_D^{\mathrm{req}}(x)$ approaches the finite lower limit corresponding to the minimum downlink SINR required for control stability as $x\to\infty$. Therefore, the equation $\bar{\gamma}_d x = \gamma_D^{\mathrm{req}}(x)$	admits a unique solution $\eta_{\mathrm{ctrl}} \in (\eta_V,\infty)$. Hence, the success event reduces to $G_D \ge \eta_{\mathrm{ctrl}}$, and the control-only outage probability is obtained from the CDF of $G_D \sim \mathrm{Gamma}(M,1)$ as $P_{\mathrm{out}}^{\mathrm{ctrl}}
	=
	\Pr(G_D < \eta_{\mathrm{ctrl}})
	=
	F_G(\eta_{\mathrm{ctrl}})
	=
	1 - e^{-\eta_{\mathrm{ctrl}}}\sum_{k=0}^{M-1}\frac{\eta_{\mathrm{ctrl}}^k}{k!}$.
\end{IEEEproof}

\subsection{JDCC Outage Probability}
\label{subsec:joint_outage}

After characterizing the outage probabilities of the two single-function benchmark paradigms, we next study the outage behavior of the dual-functional JDCC system. Different from the single-function cases, the JDCC outage event is jointly constrained by the communication and control requirements. Therefore, we use their joint probability \cite{gan2024coverage} to quantify the reliability of supporting communication and control simultaneously. Specifically, for given thresholds $\tau_{\mathrm{req}} > 0$ and $V_{\mathrm{req}} > \sigma_w^2$, the JDCC outage probability is defined as
\begin{equation}
	P_{\mathrm{out}}^{\mathrm{JDCC}}
	\triangleq
	1 - \Pr\!\left(\tau_U \le \tau_{\mathrm{req}},\ V_\infty \le V_{\mathrm{req}}\right),
	\label{eq:joint_outage_rate_variance}
\end{equation}
where $\tau_{\mathrm{req}} > 0 $ is the maximum tolerable communication transmission delay and $V_{\mathrm{req}} > \sigma_w^2$ is the maximum tolerable steady-state control error variance. On the other hand, the non-outage event in \eqref{eq:joint_outage_rate_variance} can be equivalently rewritten as
\begin{equation}
	P_{\mathrm{out}}^{\mathrm{JDCC}}
	=
	1 - \Pr\!\left(\mathcal{E}_{\mathrm{suc}}^{\mathrm{JDCC}}\right),
	\label{eq:joint_outage_def}
\end{equation}
where
\begin{equation}
	\mathcal{E}_{\mathrm{suc}}^{\mathrm{JDCC}}
	\!=\!
	\left\{
	\mathrm{SINR}_U^{\mathrm{dn}} \! \ge \! \gamma_U^{\mathrm{req}},\
	G_D \! >\! \eta_V,\
	\mathrm{SINR}_D^{\mathrm{dn}} \!\ge\! \gamma_D^{\mathrm{req}}(G_D)
	\right\}.
	\label{eq:joint_success_def}
\end{equation}
Here, the first condition corresponds to the communication non-outage event, while the latter two jointly characterize the control non-outage event. In the following, we characterize the JDCC outage probability under MRT and ZF, respectively.

\subsubsection{JDCC Outage Probability under MRT}

We first specialize~\eqref{eq:joint_success_def} to MRT. The corresponding joint outage characterization is given next.

\begin{theorem}[Joint JDCC Outage Probability under MRT]
  \label{thm:icac_mrt}
  The JDCC outage probability under MRT is given by
  \begin{align}
    P_{\mathrm{out}}^{\mathrm{JDCC,MRT}} = 1 - & \int_{\eta_V}^{\infty}\!\int_0^{r_{\max}(x)} f_G(x)\,f_\rho(r) \notag \\
                                                & \times \mathbf{1}\!\left(x > \xi(x,r)\right)\bigl[1-F_G\!\bigl(\psi(x,r)\bigr)\bigr]\,\mathrm{d}r\,\mathrm{d}x,
    \label{eq:icac_mrt_out}
  \end{align}
  where $\mathbf{1}(\cdot)$ denotes the indicator function, $f_\rho(r)
  =
  (M-1)(1-r)^{M-2}$, $f_G(x) = \frac{x^{M-1}e^{-x}}{(M-1)!}$, $F_G(x) = \frac{\gamma(M,x)}{\Gamma(M)}
  = 1 - e^{-x}\sum_{k=0}^{M-1}\frac{x^k}{k!}$, and, for each $x > \eta_V$,
  \begin{subequations}
  	\begin{align}
  		r_{\max}(x)
  		& =
  		\min\!\left(1,\,\frac{1}{\sqrt{\gamma_U^{\mathrm{req}}\gamma_D^{\mathrm{req}}(x)}}\right),
  		\label{eq:r_max_x} \\
  		\xi(x,r)
  		& =
  		\frac{\eta_f(x)\bigl(1+\gamma_U^{\mathrm{req}}r\bigr)}
  		{1-\gamma_U^{\mathrm{req}}\gamma_D^{\mathrm{req}}(x)r^2},
  		\label{eq:xi_mrt} \\
  		\psi(x,r)
  		& =
  		\frac{\eta_U x\bigl(1+\gamma_D^{\mathrm{req}}(x)r\bigr)}
  		{x\bigl(1-\gamma_U^{\mathrm{req}}\gamma_D^{\mathrm{req}}(x)r^2\bigr)-\eta_f(x)\bigl(1+\gamma_U^{\mathrm{req}}r\bigr)},
  		\label{eq:psi_mrt} \\
		\eta_f(x)
  		& =
  		\frac{\gamma_D^{\mathrm{req}}(x)}{\bar{\gamma}_d}.
  		\label{eq:eta_f_zf}
  	\end{align}
  \end{subequations}
\end{theorem}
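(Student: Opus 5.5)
The plan is to reduce the joint success event \eqref{eq:joint_success_def} under MRT to explicit conditions on three random variables: $G_D=\|\mathbf{h}_D\|^2/\beta_D$, $G_U=\|\mathbf{h}_U\|^2/\beta_U$ and the correlation $\rho$. I would then integrate over their joint law. I interpret MRT here as in \eqref{eq:bf_mrt}, with the power split $(p_D,p_U)$ chosen by the BS under $p_D+p_U\le P_{\mathrm{dn}}$. Success therefore means that \emph{some} feasible split meets both SINR targets.

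\textbf{Step 1: reduce the SINR conditions to a linear system in the powers.} Write $q_D=p_D/P_{\mathrm{dn}}$ and $q_U=p_U/P_{\mathrm{dn}}$. Substituting \eqref{eq:bf_mrt} into \eqref{eq:sinr_U} and \eqref{eq:sinr_D}, and using $|\mathbf{h}_D^H\mathbf{h}_U|^2=\rho\|\mathbf{h}_D\|^2\|\mathbf{h}_U\|^2$, the two requirements become linear in $(q_D,q_U)$:
\begin{equation*}
q_D\ge \gamma_D^{\mathrm{req}}(G_D)\,\rho\, q_U+\frac{\eta_f(G_D)}{G_D},\qquad q_U\ge \gamma_U^{\mathrm{req}}\rho\, q_D+\frac{\eta_U}{G_U}.
\end{equation*}
Here $\eta_f=\gamma_D^{\mathrm{req}}/\bar{\gamma}_d$ and $\eta_U$ is as in \eqref{eq:comm_outage}. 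This is a standard two-user power-control feasibility problem.

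\textbf{Step 2: feasibility of the system.} A nonnegative solution exists iff the spectral radius of the coupling matrix is below one, i.e., $\gamma_U^{\mathrm{req}}\gamma_D^{\mathrm{req}}\rho^2<1$. This gives the cap $r_{\max}(x)$ in \eqref{eq:r_max_x}. In that case the componentwise-minimal solution is obtained by solving with equality:
\begin{equation*}
q_D^{\min}=\frac{\gamma_D^{\mathrm{req}}\rho\,\eta_U/G_U+\eta_f/G_D}{1-\gamma_U^{\mathrm{req}}\gamma_D^{\mathrm{req}}\rho^2},\qquad
q_U^{\min}=\frac{\gamma_U^{\mathrm{req}}\rho\,\eta_f/G_D+\eta_U/G_U}{1-\gamma_U^{\mathrm{req}}\gamma_D^{\mathrm{req}}\rho^2}.
\end{equation*}
Any feasible split dominates this one, so success is equivalent to $q_D^{\min}+q_U^{\min}\le 1$. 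That is,
\begin{equation*}
\frac{\eta_f}{G_D}(1+\gamma_U^{\mathrm{req}}\rho)+\frac{\eta_U}{G_U}(1+\gamma_D^{\mathrm{req}}\rho)\le 1-\gamma_U^{\mathrm{req}}\gamma_D^{\mathrm{req}}\rho^2 .
\end{equation*}
Solving for $G_U$ requires the right-hand side minus the $G_D$-term to be positive, which is exactly $G_D>\xi(G_D,\rho)$. Given that, the condition becomes $G_U\ge\psi(G_D,\rho)$, which reproduces \eqref{eq:xi_mrt} and \eqref{eq:psi_mrt}. The uplink condition $G_D>\eta_V$ is inherited unchanged from \eqref{eq:ctrl_outage_general_event}.

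\textbf{Step 3: distributions and integration.} For $\mathbf{h}_D$ and $\mathbf{h}_U$ independent isotropic complex Gaussian vectors:
\begin{itemize}
\item $G_D,G_U\sim\mathrm{Gamma}(M,1)$.
\item $\rho$ is the squared cosine between independent isotropic directions, so $\rho\sim\mathrm{Beta}(1,M-1)$ with density $(M-1)(1-r)^{M-2}$.
\item By the polar decomposition of isotropic vectors, $\rho$ is independent of both norms.
\end{itemize}
Hence the triple is mutually independent. Conditioning on $(G_D,\rho)=(x,r)$ gives $\Pr(G_U\ge\psi)=1-F_G(\psi(x,r))$. Integrating over $x>\eta_V$ and $r\in[0,r_{\max}(x))$, with the indicator $\mathbf{1}(x>\xi(x,r))$, yields \eqref{eq:icac_mrt_out}.

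\textbf{Main obstacle.} I expect the main obstacle to be justifying rigorously that feasibility of the coupled inequalities is equivalent to the closed-form conditions. This needs a Perron–Frobenius (or direct $2\times2$) argument: when the spectral radius is below one, the minimal solution is the equality solution and every feasible point dominates it; when it is at least one, no nonnegative solution exists, because substituting one inequality into the other gives $q_D(1-\gamma_U^{\mathrm{req}}\gamma_D^{\mathrm{req}}\rho^2)\ge$ a positive quantity. I would handle it by that direct substitution, and treat the boundary case $\gamma_U^{\mathrm{req}}\gamma_D^{\mathrm{req}}\rho^2=1$ as a null set.
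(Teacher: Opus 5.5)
Your proposal is correct. It arrives at exactly the paper's feasibility inequality $y\,[x(1-\gamma_U^{\mathrm{req}}\gamma_D^{\mathrm{req}}(x)r^2)-\eta_f(x)(1+\gamma_U^{\mathrm{req}}r)]\ge \eta_U x(1+\gamma_D^{\mathrm{req}}(x)r)$, but by a somewhat different route.

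\textbf{The paper's route.} It fixes the full-power split $p_U=P_{\mathrm{dn}}-p_D$ and uses the opposite monotonicity of the two MRT SINRs in $p_D$. This turns the two targets into a lower bound $p_D^{\min}(x,r)$ and an upper bound $p_D^{\max}(y,r)$, and success means this interval is nonempty. The condition $\gamma_U^{\mathrm{req}}\gamma_D^{\mathrm{req}}(x)r^2<1$ appears only afterwards, as a sign requirement in the rearranged inequality.

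\textbf{Your route.} You keep $p_D+p_U\le P_{\mathrm{dn}}$ as an inequality and treat the two targets as a two-user power-control system $q\ge Cq+b$. You obtain $r<r_{\max}(x)$ structurally, as the condition for a componentwise-minimal power vector to exist. Success is then the single check $q_D^{\min}+q_U^{\min}\le 1$.

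\textbf{What each buys.} The paper's one-dimensional argument is shorter. However, it tacitly assumes full power is without loss of generality. That is true, because scaling both powers by a common factor $t>1$ increases both SINRs, but the paper never states it. Your version needs no such observation and explains where $r_{\max}$ comes from.

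You also justify, via the polar decomposition of isotropic Gaussian vectors, that $\rho\sim\mathrm{Beta}(1,M-1)$ is independent of $(G_D,G_U)$. The paper uses this fact without proof.

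Finally, your direct $2\times 2$ substitution already gives both infeasibility when $\gamma_U^{\mathrm{req}}\gamma_D^{\mathrm{req}}\rho^2\ge 1$ and the dominance $q\ge q^{\min}$. So no general Perron--Frobenius argument is actually needed.
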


\begin{IEEEproof}
	Please refer to Appendix~\ref{app:proof_joint_mrt_outage}.
\end{IEEEproof}

\subsubsection{Joint JDCC Outage under ZF}

We next specialize~\eqref{eq:joint_success_def} to ZF. The corresponding characterization is stated below.

\begin{theorem}[Joint JDCC Outage under ZF]
  \label{thm:icac_zf}
  The joint JDCC outage probability under ZF is
  \begin{align}
    P_{\mathrm{out}}^{\mathrm{JDCC,ZF}} = 1 - & \int_{\eta_V}^{\infty}\int_0^{[1-\eta_f(x)/x]^+} f_G(x)\,f_\rho(r) \notag \\
                                              & \times \bigl[1-F_G\bigl(\phi(x,r)\bigr)\bigr]\,\mathrm{d}r\,\mathrm{d}x,
    \label{eq:icac_zf_out}
  \end{align}
  where $[z]^+ = \max(z,0)$ and, for each $x > \eta_V$, and $\phi(x,r) = \frac{\eta_U x}{(1-r)x-\eta_f(x)}$.
\end{theorem}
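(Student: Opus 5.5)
The proof specializes the success event \eqref{eq:joint_success_def} to the ZF beamformers \eqref{eq:zf_directions}. It then conditions on $G_D=x$ and on the correlation $\rho=r$, and finally integrates over the independent remaining randomness in $\|\mathbf{h}_U\|^2$.

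\textbf{Step 1: ZF SINRs.} Under ZF the cross terms vanish, since $\mathbf{h}_U^H\mathbf{P}_U^{\perp}=\mathbf{0}$ and $\mathbf{h}_D^H\mathbf{P}_D^{\perp}=\mathbf{0}$. Using $\|\mathbf{P}_U^{\perp}\mathbf{h}_D\|^2=(1-\rho)\|\mathbf{h}_D\|^2$, and likewise for $\mathbf{h}_U$, as in the proof of Theorem~\ref{thm:zf_pareto}, we get
\[
\mathrm{SINR}_D^{\mathrm{dn}}=\frac{p_D(1-\rho)\|\mathbf{h}_D\|^2}{\sigma_{\mathrm{dn}}^2},\qquad \mathrm{SINR}_U^{\mathrm{dn}}=\frac{p_U(1-\rho)\|\mathbf{h}_U\|^2}{\sigma_{\mathrm{dn}}^2}.
\]

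\textbf{Step 2: power allocation and the control condition.} For a realization with $G_D=x>\eta_V$, the best choice under ZF is the allocation of Theorem~\ref{thm:zf_pareto}. It gives the CD exactly the power needed to meet $\gamma_D^{\mathrm{req}}(x)$, namely
\[
p_D=\frac{\gamma_D^{\mathrm{req}}(x)\sigma_{\mathrm{dn}}^2}{\beta_D x(1-r)}=\frac{P_{\mathrm{dn}}\eta_f(x)}{x(1-r)},
\]
where we used $\bar\gamma_d=P_{\mathrm{dn}}\beta_D/\sigma_{\mathrm{dn}}^2$. Any leftover power increases $\Gamma_U$, so this allocation is optimal for the joint event. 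The control part of the success event is feasible iff $p_D\le P_{\mathrm{dn}}$, i.e. $(1-r)x\ge \eta_f(x)$, i.e. $r\le 1-\eta_f(x)/x$. This produces the upper limit $[1-\eta_f(x)/x]^+$; the positive part covers the case where no $r\ge0$ is feasible. The condition $G_D>\eta_V$ produces the lower limit of the $x$-integral. The remaining power is
\[
p_U=P_{\mathrm{dn}}\,\frac{(1-r)x-\eta_f(x)}{(1-r)x}.
\]

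\textbf{Step 3: the communication condition.} Writing $\|\mathbf{h}_U\|^2=\beta_U G_U$, the requirement $\mathrm{SINR}_U^{\mathrm{dn}}\ge\gamma_U^{\mathrm{req}}$ becomes
\[
G_U\ge \frac{\gamma_U^{\mathrm{req}}\sigma_{\mathrm{dn}}^2}{P_{\mathrm{dn}}\beta_U}\cdot\frac{x}{(1-r)x-\eta_f(x)}=\phi(x,r),
\]
using the definition of $\eta_U$. Its conditional probability is $1-F_G(\phi(x,r))$.

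\textbf{Step 4: distributions and independence.} This is the step I expect to be the main obstacle: justifying that $\|\mathbf{h}_D\|^2$, $\|\mathbf{h}_U\|^2$ and $\rho$ are mutually independent with the stated laws. For i.i.d.\ Gaussian vectors, the direction $\mathbf{h}_D/\|\mathbf{h}_D\|$ is isotropic and independent of $\|\mathbf{h}_D\|$, and similarly for $\mathbf{h}_U$. Hence $\rho=|\mathbf{u}_D^H\mathbf{u}_U|^2$ for two independent isotropic unit vectors. Fixing $\mathbf{u}_D$ by rotation invariance, $\rho$ is the squared modulus of one coordinate of a uniform point on the complex unit sphere, which is $\mathrm{Beta}(1,M-1)$ with density $(M-1)(1-r)^{M-2}$. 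Moreover, $\rho$ is independent of both norms, and $G_D,G_U\sim\mathrm{Gamma}(M,1)$.

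\textbf{Step 5: assemble.} Taking expectation over $x$ and $r$ with density $f_G(x)f_\rho(r)$, restricted to $x>\eta_V$ and $0\le r\le[1-\eta_f(x)/x]^+$, of the conditional success probability $1-F_G(\phi)$ gives $\Pr(\mathcal{E}_{\mathrm{suc}}^{\mathrm{JDCC}})$. Equation \eqref{eq:joint_outage_def} then yields \eqref{eq:icac_zf_out}. A final check is that $\phi>0$ on the integration domain, which holds since there $(1-r)x>\eta_f(x)$; the boundary is a null set.
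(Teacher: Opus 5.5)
Your proposal is correct and follows essentially the same route as the paper. Like the paper, you condition on $(G_D,\rho)=(x,r)$ and reduce joint success to the feasibility of a ZF power split: giving the CD exactly its minimum required power is equivalent to the paper's nonempty-interval condition $y[(1-r)x-\eta_f(x)]\ge\eta_U x$. This yields $G_U\ge\phi(x,r)$ on $r<1-\eta_f(x)/x$, and you then integrate over the independent variables; your Step 4 also justifies the $\mathrm{Beta}(1,M-1)$ law of $\rho$ and its independence from $G_D$ and $G_U$, which the paper only asserts.
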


\begin{IEEEproof}
	Please refer to Appendix~\ref{app:proof_joint_zf_outage}.
\end{IEEEproof}

\section{Numerical Results}
\label{sec:numerical}
In this section, numerical results are provided to validate the theoretical
results and evaluate the performance of JDCC systems. We simulate the
large-scale fading coefficients of $\mathbf{h}_U$ and $\mathbf{h}_D$ as
$\beta_k=C_0 d_k^{-\alpha_{\mathrm{pl}}}$ for $k\in\{U,D\}$, where the
reference gain is $C_0=-30$ dB and the path-loss exponent is
$\alpha_{\mathrm{pl}}=3.2$. The CU and CD are located at distances $d_U=100$ m
and $d_D=120$ m from the BS, respectively. We consider a BS equipped with $M=4$ antennas, with transmit
powers $P_{\mathrm{dn}}=P_{\mathrm{up}}=-30$ dBm for the uplink and downlink. Specifically, for the control process, the plant coefficient is set to $a=1.2+1.2i$, $b=1$, and the process-noise variance is
$\sigma_w^2=10^{-2}$. The uplink CD state-reporting bandwidth is set to
$B_{\mathrm{up}}=10$ kHz, whereas the downlink bandwidth is set to
$B_{\mathrm{dn}}=20$ kHz. The control sampling period is
set to $T_s^D = 0.1$ ms. The receiver noise
power spectral density is set to $N_0=-174$ dBm/Hz. The communication payload size is set to
$Q_U = 1000$ bits.

Fig.~\ref{fig0_control_variance} plots the evolution of the normalized state variance $V_n/\sigma_w^2$ over the control interval index $n$. The Monte Carlo markers closely match the theoretical curves at each interval, which verifies the derived state-variance evolution. Under $P_{\mathrm{dn}} = P_{\mathrm{up}} = -30$ dBm, the case $a=1.2+1.2i$ converges to the steady-state variance in \emph{Theorem~1} within a few intervals, whereas the case $a=5+6i$ becomes unstable and its variance keeps increasing. After increasing the transmit powers to $P_{\mathrm{dn}} = P_{\mathrm{up}} = 0$ dBm, even the larger-$|a|$ case converges to a finite steady-state variance. These results verify \emph{Theorem~1} and show the critical role of communication link quality in maintaining closed-loop stability.

\begin{figure}[t]
	\centering
	\includegraphics[width=0.85\linewidth]{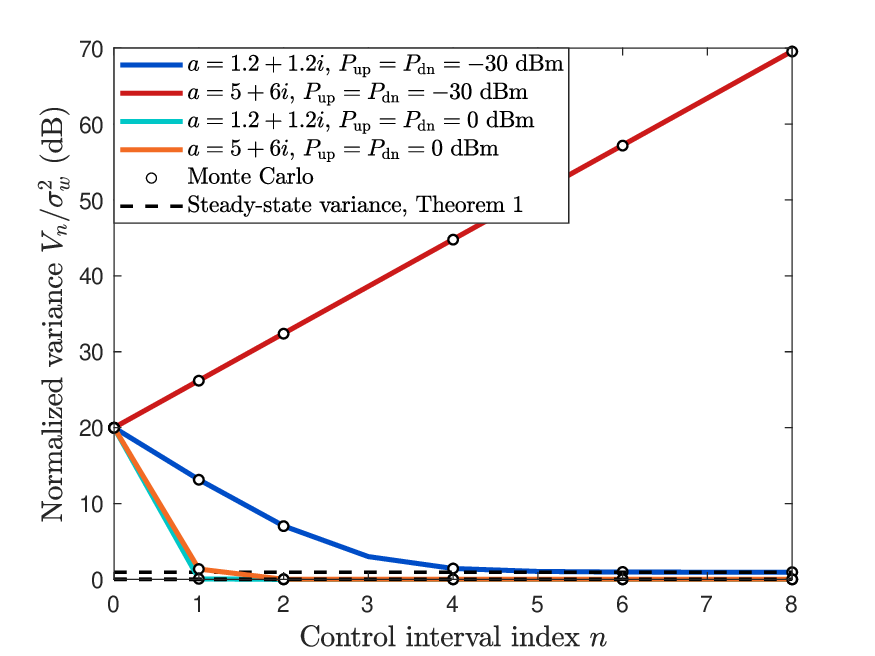}\vspace{-2mm}
	\caption{Normalized state variance $V_n/\sigma_w^2$ versus control interval index $n$, where the initial variance is $V_0=1$.}
	\label{fig0_control_variance}
\end{figure}

\begin{figure}[t]
  \centering
  \includegraphics[width=0.85\linewidth]{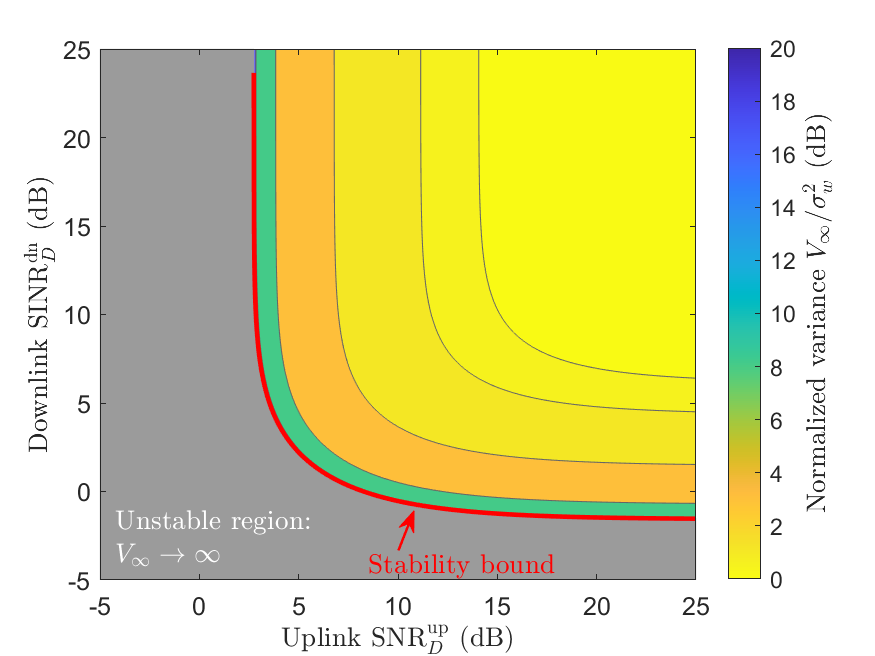}\vspace{-2mm}

  \caption{Normalized state variance $V_\infty/\sigma_w^2$ versus uplink $\mathrm{SNR}_D^{\mathrm{up}}$ and downlink $\mathrm{SINR}_D^{\mathrm{dn}}$.}
  \label{fig1_variance_bound}
\end{figure}

Fig.~\ref{fig1_variance_bound} illustrates the deterministic role of communication link quality in maintaining closed-loop stability. The red curve, obtained from the stability conditions in \emph{Theorem~1}, serves as the stability bound in the uplink-downlink SNR plane and clearly separates the stable and unstable regions. Below this bound, the wireless link quality is insufficient to support stable control. Within the stable region, the normalized variance decreases markedly as either $\mathrm{SNR}_D^{\mathrm{up}}$ or $\mathrm{SINR}_D^{\mathrm{dn}}$ increases, showing that both uplink state reporting and downlink control delivery affect the final control accuracy. This result further confirms that control stability is jointly determined by the uplink and downlink qualities rather than by either link alone.

\begin{figure}[t]
  \centering
  \includegraphics[width=0.85\linewidth]{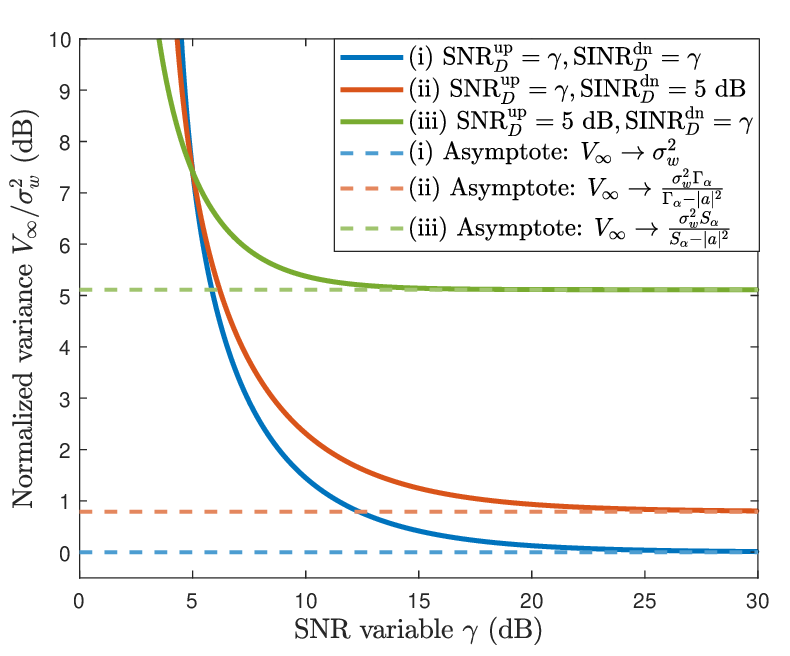}\vspace{-2mm}

  \caption{Normalized variance $V_\infty/\sigma_w^2$ for three asymptotic regimes of $\mathrm{SNR}_D^{\mathrm{up}}$ and $\mathrm{SINR}_D^{\mathrm{dn}}$.}
  \label{fig2_variance_asymptotics}
\end{figure}

Fig.~\ref{fig2_variance_asymptotics} examines the asymptotic behavior of the normalized variance $V_\infty/\sigma_w^2$ under the three high-SNR regimes in Corollary~\ref{cor:asymptotic}. As $\gamma$ increases, the exact results in all three cases gradually approach their corresponding asymptotic values, which verifies Corollary~\ref{cor:asymptotic}. In case (i), where both uplink and downlink qualities increase simultaneously, the normalized variance approaches $0$ dB, indicating asymptotically vanishing control error. By contrast, in cases (ii) and (iii), where only one link improves while the other remains fixed, the variance converges to a non-zero floor determined by the limited link.

\begin{figure}[t]
  \centering
  \includegraphics[width=0.85\linewidth]{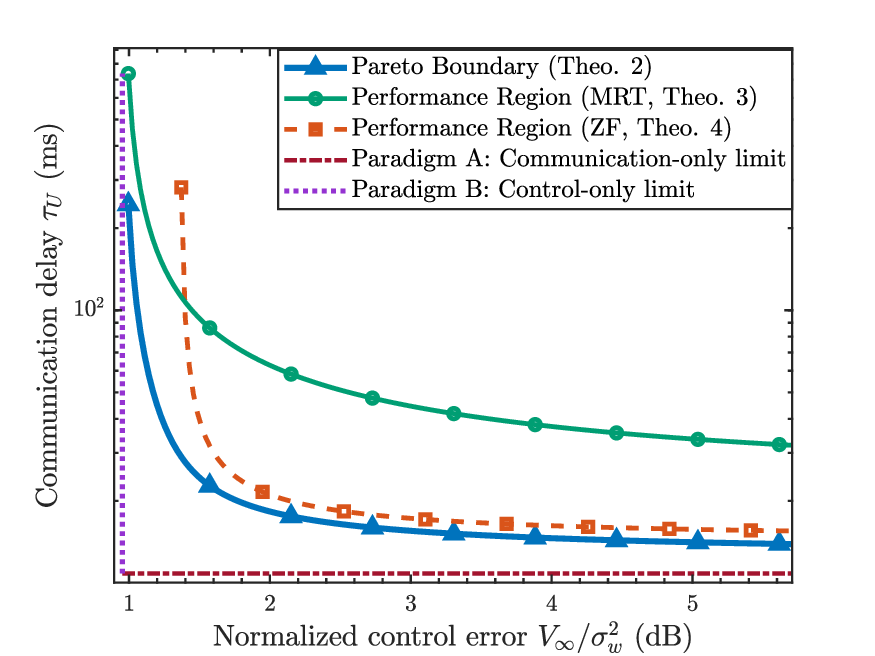}\vspace{-2mm}

  \caption{Performance region of $(\tau_U,V_\infty)$ under Pareto-optimal, MRT, and ZF beamforming under $\rho=0.5$.}
  \label{fig3_pareto_boundary}
\end{figure}

Fig.~\ref{fig3_pareto_boundary} illustrates the Pareto boundary of the considered JDCC system, together with the achievable delay-control performance regions under MRT and ZF, as well as the communication-only and control-only benchmark limits. It is observed that the Pareto boundary serves as the outer boundary of the MRT- and ZF-based achievable regions, and thus characterizes the optimal communication delay-control error trade-off. In addition, when a sufficiently large communication delay is allowed, the Pareto-optimal control performance approaches the control-only limit. However, even when a large control error is tolerated, the Pareto-optimal communication delay still cannot approach the communication-only limit. This is because closed-loop stability always requires a non-zero amount of power to support the control link, and hence any finite control-error target inevitably consumes part of the communication resource.

\begin{figure}[t]
  \centering
  \includegraphics[width=0.85\linewidth]{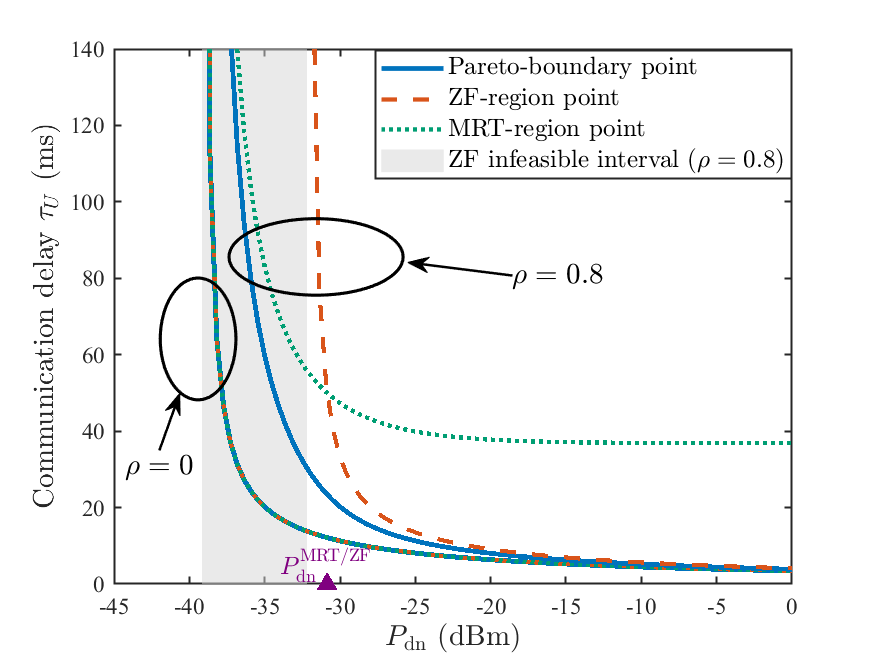}\vspace{-2mm}

  \caption{Communication delay $\tau_U$ (s) versus total downlink power $P_{\mathrm{dn}}$ with fixed control-SINR target $\gamma_D=0$ dB.}
  \label{fig4_mrt_zf}
\end{figure}

Fig.~\ref{fig4_mrt_zf} further compares the communication delay achieved by the Pareto-boundary solution, MRT, and ZF versus the total downlink transmit power $P_{\mathrm{dn}}$ under the fixed control-SINR target $\gamma_D=0$ dB. When $\rho=0$, the CU and CD channels are orthogonal, and the Pareto boundary coincides with the MRT and ZF regions over the entire power range, which verifies Remark~\ref{rem:pareto_mrt_zf_comparison}. When $\rho>0$, ZF exhibits an infeasible low-power region due to the projection loss caused by interference suppression. As $P_{\mathrm{dn}}$ increases, however, the interference-mitigation gain of ZF gradually outweighs the array-gain advantage of MRT. In particular, once $P_{\mathrm{dn}} > P_{\mathrm{dn}}^{\mathrm{MRT/ZF}}$, ZF achieves a lower communication delay than MRT, which is consistent with the analytical comparison in Remark~\ref{rem:pareto_mrt_zf_comparison}.

\begin{figure}[t]
  \centering
  \includegraphics[width=0.85\linewidth]{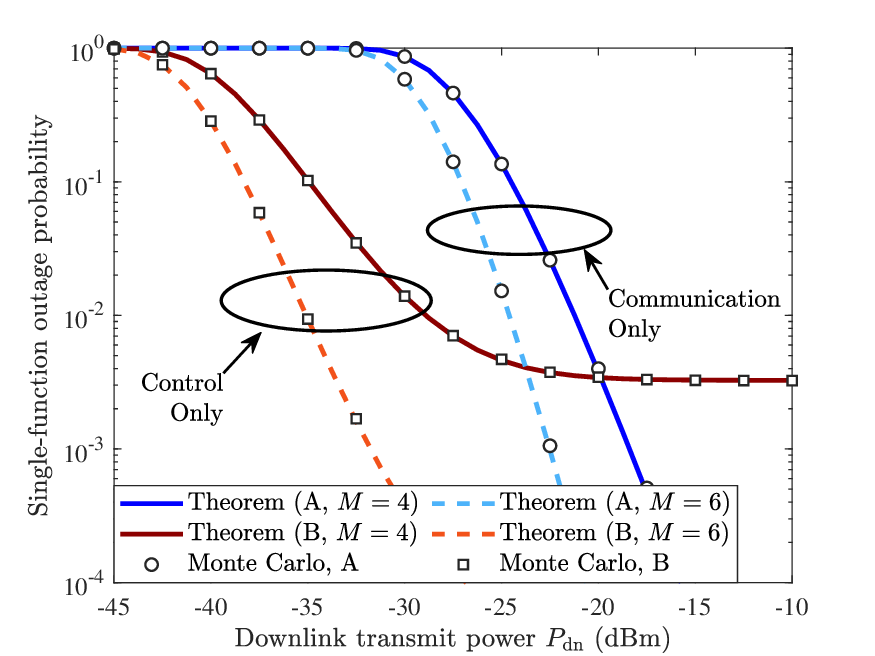}\vspace{-2mm}

  \caption{Outage probability for communication-only and control-only systems under $(\tau_{\mathrm{req}},V_{\mathrm{req}})=(10~\mathrm{ms},\,3\sigma_w^2)$.}
  \label{fig5_single_outage}
\end{figure}

Fig.~\ref{fig5_single_outage} illustrates the communication-only and control-only outage probabilities versus the downlink transmit power $P_{\mathrm{dn}}$ in the corresponding single-function systems. It can be observed that the Monte Carlo results closely match the analytical curves in all cases, thereby validating the theoretical results in \emph{Theorem~\ref{thm:ctrl_outage_mrt}}. For the communication-only function, the outage probability decreases markedly as $P_{\mathrm{dn}}$ increases, and this reduction becomes more pronounced when the number of BS antennas increases from $4$ to $6$. However, the same trend does not directly apply to the control-only function. In particular, when $M=4$, further increasing $P_{\mathrm{dn}}$ cannot continuously reduce the control outage probability, and a clear outage floor appears in the high-power region. The reason is that the control reliability is jointly constrained by both the uplink state-reporting quality and the downlink control transmission quality, so the uplink SNR becomes the dominant bottleneck. By contrast, when the number of BS antennas increases to $M=6$, the uplink reporting quality is also improved, and the control outage probability can therefore continue to decrease with $P_{\mathrm{dn}}$. These results reveal a key difference between communication reliability and control reliability: the latter is fundamentally limited by the joint uplink-downlink closed-loop quality.

\begin{figure}[t]
  \centering
  \includegraphics[width=0.85\linewidth]{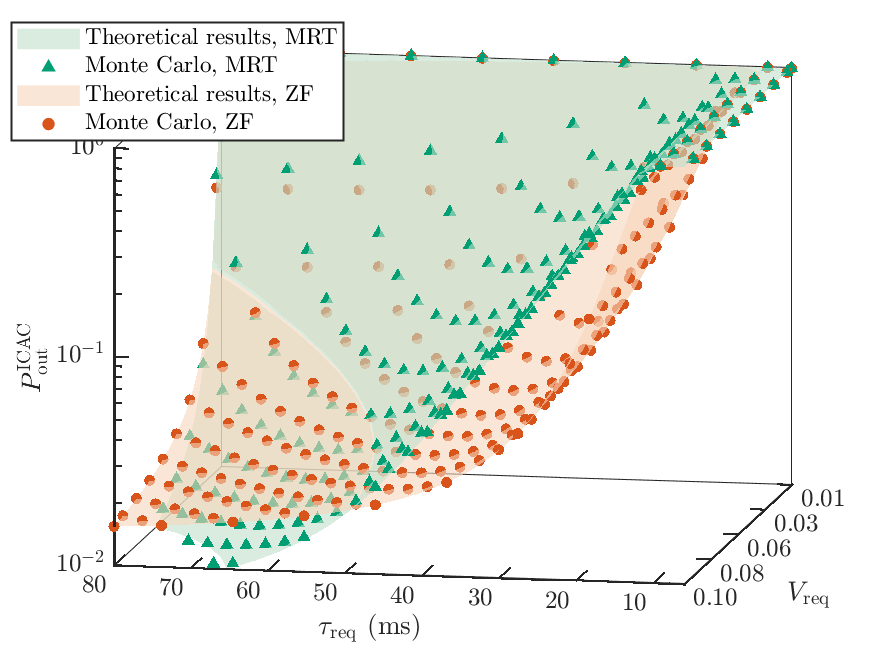}\vspace{-2mm}

  \caption{Joint JDCC outage probability surfaces versus the required communication delay and control error.}
  \label{fig6_fig7_joint_outage}
\end{figure}

Fig.~\ref{fig6_fig7_joint_outage} shows the theoretical joint JDCC outage probability surfaces under MRT and ZF, together with the corresponding Monte Carlo results. It is observed that the theoretical results match the simulations closely, thereby validating Theorems~\ref{thm:icac_mrt} and~\ref{thm:icac_zf}. Moreover, the joint outage probability increases rapidly as either the delay requirement $\tau_{\mathrm{req}}$ or the control-error requirement $V_{\mathrm{req}}$ becomes more stringent. This increase is more pronounced under MRT than under ZF.

\section{Conclusion}
\label{sec:conclusion}
This paper investigated the modeling and performance analysis of JDCC systems, where a multi-antenna BS simultaneously serves a CU and a CD within the same wireless architecture. Based on the developed JDCC model, the communication transmission delay and steady-state control variance were established as the two core performance metrics. More particularly, a rate-distortion-theoretic control analysis was developed to derive the variance recursion and the closed-form steady-state control variance. The Pareto boundary of the JDCC performance region was then characterized to reveal the fundamental communication-control trade-off, and closed-form expressions for JDCC performance regions under MRT and ZF were obtained. Moreover, the joint JDCC outage probability was defined to quantify the reliability of the JDCC system, and the corresponding outage expressions under MRT and ZF were derived. Numerical results validated the theoretical derivations and revealed that JDCC performance and reliability are jointly shaped by uplink reporting quality, downlink control and communication capabilities. The developed framework provides a unified analytical basis for understanding the performance trade-off and reliability coupling in JDCC systems.


\appendices
\newcommand{\appsection}[2]{%
	\refstepcounter{section}%
	\section*{APPENDIX~\thesection: #1}%
	\label{#2}%
}
\appsection{PROOF OF THEOREM~\ref{thm:bf_structure}}{app:proof_bf_structure}
For problem \eqref{eq:pareto_opt_sinr}, the KKT conditions are necessary at any regular Pareto-boundary point. Moreover, since $\Gamma_U^\star(\gamma_D)$ is obtained by maximizing the communication SINR under the control-SINR requirement $\gamma_D$, both the total-power constraint and the control-SINR constraint must be active at the optimum, which directly gives \eqref{eq:kkt_power} and \eqref{eq:kkt_sinr}. Next, setting the first-order derivative of the Lagrangian function \eqref{eq:lagrangian} with respect to $\mathbf{w}_D$ to zero yields $\mathbf{w}_D^\star \propto \left(\mathbf{I}_M + \frac{1}{\nu^\star}\frac{\Gamma_U^\star(\gamma_D)}{|\mathbf{h}_U^H\mathbf{w}_D^\star|^2+\sigma_{\mathrm{dn}}^2}\mathbf{h}_U\mathbf{h}_U^H\right)^{-1}\mathbf{h}_D$, which gives \eqref{eq:wD_opt} after defining $\mu_U$ as in \eqref{eq:kkt_muU} and introducing $p_D=\|\mathbf{w}_D^\star\|^2$. Similarly, setting the first-order derivative with respect to $\mathbf{w}_U$ to zero yields $\mathbf{w}_U^\star \propto \left(\mathbf{I}_M + \frac{1}{\nu^\star}\frac{\lambda^\star\gamma_D}{|\mathbf{h}_D^H\mathbf{w}_U^\star|^2+\sigma_{\mathrm{dn}}^2}\mathbf{h}_D\mathbf{h}_D^H\right)^{-1}\mathbf{h}_U$, which gives \eqref{eq:wU_opt} after defining $\mu_D$ as in \eqref{eq:kkt_muD} and setting $p_U=\|\mathbf{w}_U^\star\|^2$. Therefore, \eqref{eq:wD_opt}--\eqref{eq:kkt_muD} jointly characterize the beamforming solution that attains $\Gamma_U^\star(\gamma_D)$. Substituting the resulting $\Gamma_U^\star(\gamma_D)$ into the communication-delay expression and combining it with $V_\infty(\gamma_D)$ obtained from \eqref{eq:V_infinity} yields the Pareto-boundary point in \eqref{eq:pareto_point_gamma}.

\appsection{PROOF OF THEOREM~\ref{thm:mrt_pareto}}{app:proof_mrt_pareto}
Under the MRT scheme in \eqref{eq:bf_mrt}, the beam directions are fixed by the two channel vectors, and thus the resulting performance region is completely determined by the power allocation. Substituting \eqref{eq:bf_mrt} into the received-signal expressions yields
\begin{align}
	|\mathbf{h}_D^H \mathbf{w}_D^{\mathrm{MRT}}|^2 & = p_D^{\mathrm{MRT}} \|\mathbf{h}_D\|^2, \
	|\mathbf{h}_D^H \mathbf{w}_U^{\mathrm{MRT}}|^2 = p_U^{\mathrm{MRT}} \rho\,\|\mathbf{h}_D\|^2, \notag \\
	|\mathbf{h}_U^H \mathbf{w}_U^{\mathrm{MRT}}|^2 & = p_U^{\mathrm{MRT}} \|\mathbf{h}_U\|^2, \
	|\mathbf{h}_U^H \mathbf{w}_D^{\mathrm{MRT}}|^2 = p_D^{\mathrm{MRT}} \rho\,\|\mathbf{h}_U\|^2.
\end{align}
Under the sum-power constraint $p_D^{\mathrm{MRT}} + p_U^{\mathrm{MRT}} = P_{\mathrm{dn}}$, the communication SINR can be written as
\begin{equation}
	\Gamma_U(p_D)
	=
	\frac{(P_{\mathrm{dn}}-p_D)\|\mathbf{h}_U\|^2}
	{p_D\rho\|\mathbf{h}_U\|^2 + \sigma_{\mathrm{dn}}^2},
\end{equation}
where $p_D \in [0,P_{\mathrm{dn}}]$. It is straightforward to verify that $\Gamma_U(p_D)$ is strictly decreasing in $p_D$, whereas $\mathrm{SINR}_D^{\mathrm{dn}}(p_D)$ is strictly increasing in $p_D$. Therefore, for a given control-SINR target $\gamma_D$, minimizing the communication delay is equivalent to maximizing $\Gamma_U(p_D)$, which is achieved by selecting the smallest feasible power allocated to the control beam. As a result, the optimal solution must satisfy the control-SINR constraint with equality, i.e.,
\begin{equation}
	\frac{p_D^{\mathrm{MRT}}\|\mathbf{h}_D\|^2}
	{(P_{\mathrm{dn}}-p_D^{\mathrm{MRT}})\rho\|\mathbf{h}_D\|^2 + \sigma_{\mathrm{dn}}^2}
	=
	\gamma_D.
	\label{eq:sinr_D_mrt}
\end{equation}
Solving \eqref{eq:sinr_D_mrt} yields \eqref{eq:pD_mrt}, and the corresponding $p_U^{\mathrm{MRT}}$ follows from $p_U^{\mathrm{MRT}} = P_{\mathrm{dn}} - p_D^{\mathrm{MRT}}$. Substituting the resulting power allocation into the communication-delay expression and the steady-state variance expression then gives the stated MRT performance region.

\appsection{PROOF OF THEOREM~\ref{thm:zf_pareto}}{app:proof_zf_pareto}
Under the ZF scheme in \eqref{eq:zf_directions}, the beam directions are fixed by the two null-space projections, and thus the resulting performance region is completely determined by the power allocation. Substituting \eqref{eq:zf_directions} into the received-signal expressions yields
\begin{equation}
	|\mathbf{h}_D^H \mathbf{w}_D^{\mathrm{ZF}}|^2 \!\!=\! p_D^{\mathrm{ZF}} \|\mathbf{h}_D\|^2(1-\rho),
	\,
	|\mathbf{h}_U^H \mathbf{w}_U^{\mathrm{ZF}}|^2\!\! =\! p_U^{\mathrm{ZF}} \|\mathbf{h}_U\|^2(1-\rho),
	\label{eq:zf_gains}
\end{equation}
while the cross-user interference terms are eliminated by construction. Under the sum-power constraint $p_D^{\mathrm{ZF}} + p_U^{\mathrm{ZF}} = P_{\mathrm{dn}}$, the communication SINR becomes
\begin{equation}
	\Gamma_U(p_D)
	=
	\frac{(P_{\mathrm{dn}}-p_D)\|\mathbf{h}_U\|^2(1-\rho)}{\sigma_{\mathrm{dn}}^2},
\end{equation}
which is strictly decreasing in $p_D$ for $p_D \in [0,P_{\mathrm{dn}}]$, whereas the control SINR under ZF is strictly increasing in $p_D$. Therefore, for a given control-SINR target $\gamma_D$, minimizing the communication delay is equivalent to maximizing $\Gamma_U(p_D)$, which is achieved by selecting the smallest feasible power allocated to the control beam. As a result, the optimal solution must satisfy the control-SINR constraint with equality, i.e.,
\begin{equation}
	\frac{p_D^{\mathrm{ZF}}\|\mathbf{h}_D\|^2(1-\rho)}{\sigma_{\mathrm{dn}}^2}
	=
	\gamma_D,
	\label{eq:sinr_D_zf}
\end{equation}
which yields $p_D^{\mathrm{ZF}}=\frac{\gamma_D\,\sigma_{\mathrm{dn}}^2}{\|\mathbf{h}_D\|^2(1-\rho)}$ and thus \eqref{eq:pU_zf}. Substituting the resulting power allocation into the communication-delay expression and the steady-state variance expression then gives the stated ZF performance region.

\appsection{PROOF OF THEOREM~\ref{thm:icac_mrt}}{app:proof_joint_mrt_outage}
Conditioning on $(G_D,G_U,\rho)=(x,y,r)$ with $x>\eta_V$, and letting $p_U=P_{\mathrm{dn}}-p_D$, the two MRT downlink SINRs become
\begin{align}
	\mathrm{SINR}_D^{\mathrm{dn}}(p_D)
	&=
	\frac{p_D\beta_D x}{(P_{\mathrm{dn}}-p_D)\beta_D r x + \sigma_{\mathrm{dn}}^2},
	\label{eq:sinr_d_mrt_proof} \\
	\mathrm{SINR}_U^{\mathrm{dn}}(p_D)
	&=
	\frac{(P_{\mathrm{dn}}-p_D)\beta_U y}{p_D\beta_U r y + \sigma_{\mathrm{dn}}^2}.
	\label{eq:sinr_u_mrt_proof}
\end{align}
For fixed $(x,y,r)$, $\mathrm{SINR}_D^{\mathrm{dn}}(p_D)$ is strictly increasing in $p_D$, whereas $\mathrm{SINR}_U^{\mathrm{dn}}(p_D)$ is strictly decreasing in $p_D$. Therefore, the joint success event is equivalent to the existence of a power split $p_D \in [0,P_{\mathrm{dn}}]$ such that
\begin{equation}
	\mathrm{SINR}_D^{\mathrm{dn}}(p_D) \ge \gamma_D^{\mathrm{req}}(x),
	\qquad
	\mathrm{SINR}_U^{\mathrm{dn}}(p_D) \ge \gamma_U^{\mathrm{req}},
\end{equation}
or equivalently, the feasible interval for $p_D$ is nonempty.

From \eqref{eq:sinr_d_mrt_proof}, the control constraint gives the lower bound
\begin{equation}
	p_D \ge p_D^{\min}(x,r)
	\triangleq
	\frac{\gamma_D^{\mathrm{req}}(x)\bigl(P_{\mathrm{dn}}\beta_D r x + \sigma_{\mathrm{dn}}^2\bigr)}
	{\beta_D x\bigl(1+\gamma_D^{\mathrm{req}}(x)r\bigr)},
	\label{eq:pD_min_mrt_joint}
\end{equation}
while \eqref{eq:sinr_u_mrt_proof} gives the upper bound
\begin{equation}
	p_D \le p_D^{\max}(y,r)
	\triangleq
	\frac{P_{\mathrm{dn}}\beta_U y - \gamma_U^{\mathrm{req}}\sigma_{\mathrm{dn}}^2}
	{\beta_U y(1+\gamma_U^{\mathrm{req}}r)}.
	\label{eq:pD_max_mrt_joint}
\end{equation}
Hence, joint success holds if and only if
\begin{equation}
	p_D^{\min}(x,r) \le p_D^{\max}(y,r).
	\label{eq:mrt_feas_interval}
\end{equation}

Substituting \eqref{eq:pD_min_mrt_joint} and \eqref{eq:pD_max_mrt_joint} into \eqref{eq:mrt_feas_interval}, and rearranging, we obtain
\begin{equation}
	\begin{aligned}
		 y\!\left[x\bigl(1-\gamma_U^{\mathrm{req}}\gamma_D^{\mathrm{req}}(x)r^2\bigr)-\eta_f(x)\bigl(1+\gamma_U^{\mathrm{req}}r\bigr)\right]
		\ge \\
		\eta_U x\bigl(1+\gamma_D^{\mathrm{req}}(x)r\bigr).
	\end{aligned}
	\label{eq:mrt_feas_rearranged}
\end{equation}
Since the right-hand side is strictly positive, a feasible $y$ exists only if
\begin{equation}
	1-\gamma_U^{\mathrm{req}}\gamma_D^{\mathrm{req}}(x)r^2 > 0,
	\qquad
	x > \frac{\eta_f(x)\bigl(1+\gamma_U^{\mathrm{req}}r\bigr)}{1-\gamma_U^{\mathrm{req}}\gamma_D^{\mathrm{req}}(x)r^2}
	= \xi(x,r).
\end{equation}
These conditions are equivalent to $0 \le r < r_{\max}(x)$ and $x>\xi(x,r)$. Under them, \eqref{eq:mrt_feas_rearranged} reduces to $y \ge \psi(x,r)$, where $\psi(x,r)$ is given in \eqref{eq:psi_mrt}. Therefore, conditioning on $(G_D,\rho)=(x,r)$, the joint success event is equivalent to
\begin{equation}
	\left\{0 \le r < r_{\max}(x),\ x>\xi(x,r),\ G_U \ge \psi(x,r)\right\}.
\end{equation}
The corresponding conditional success probability is
\begin{equation}
	\mathbf{1}\!\left(x>\xi(x,r)\right)\bigl[1-F_G\!\bigl(\psi(x,r)\bigr)\bigr]
\end{equation}
for $0 \le r < r_{\max}(x)$, and zero otherwise. Integrating over $r \in [0,r_{\max}(x)]$ and $x \in (\eta_V,\infty)$ yields \eqref{eq:icac_mrt_out}.

\appsection{PROOF OF THEOREM~\ref{thm:icac_zf}}{app:proof_joint_zf_outage}
Conditioning on $(G_D,G_U,\rho)=(x,y,r)$ with $x>\eta_V$ and letting $p_U=P_{\mathrm{dn}}-p_D$, the ZF downlink SINRs are
\begin{align}
	\mathrm{SINR}_D^{\mathrm{dn}}(p_D)
	&=
	\frac{p_D\beta_D(1-r)x}{\sigma_{\mathrm{dn}}^2},
	\label{eq:sinr_d_zf_proof} \\
	\mathrm{SINR}_U^{\mathrm{dn}}(p_D)
	&=
	\frac{(P_{\mathrm{dn}}-p_D)\beta_U(1-r)y}{\sigma_{\mathrm{dn}}^2}.
	\label{eq:sinr_u_zf_proof}
\end{align}
For fixed $(x,y,r)$, $\mathrm{SINR}_D^{\mathrm{dn}}(p_D)$ is strictly increasing in $p_D$, whereas $\mathrm{SINR}_U^{\mathrm{dn}}(p_D)$ is strictly decreasing in $p_D$. Therefore, joint success is equivalent to the existence of a power split $p_D \in [0,P_{\mathrm{dn}}]$ such that
\begin{equation}
	\mathrm{SINR}_D^{\mathrm{dn}}(p_D) \ge \gamma_D^{\mathrm{req}}(x),
	\qquad
	\mathrm{SINR}_U^{\mathrm{dn}}(p_D) \ge \gamma_U^{\mathrm{req}},
\end{equation}
or equivalently, the feasible interval for $p_D$ is nonempty.

From \eqref{eq:sinr_d_zf_proof} and \eqref{eq:sinr_u_zf_proof}, the two constraints give
\begin{equation}
	p_D \ge \frac{\gamma_D^{\mathrm{req}}(x)\sigma_{\mathrm{dn}}^2}{\beta_D(1-r)x},
	\qquad
	p_D \le P_{\mathrm{dn}} - \frac{\gamma_U^{\mathrm{req}}\sigma_{\mathrm{dn}}^2}{\beta_U(1-r)y}.
\end{equation}
Hence, joint success holds if and only if
\begin{equation}
	y\bigl[(1-r)x-\eta_f(x)\bigr] \ge \eta_U x.
	\label{eq:zf_feas_rearranged}
\end{equation}
Since the right-hand side is strictly positive, a finite $y$ can satisfy \eqref{eq:zf_feas_rearranged} only if $(1-r)x>\eta_f(x)$, namely
\begin{equation}
	0 \le r < 1-\frac{\eta_f(x)}{x}.
\end{equation}
Under this condition, \eqref{eq:zf_feas_rearranged} reduces to $y \ge \phi(x,r)$. Therefore, conditioning on $(G_D,\rho)=(x,r)$, the joint success event is equivalent to
\begin{equation}
	\left\{0 \le r < 1-\frac{\eta_f(x)}{x},\ G_U \ge \phi(x,r)\right\}.
\end{equation}
Using the independence of $G_D$, $G_U$, and $\rho$, the corresponding success probability is $1-F_G(\phi(x,r))$ for $0 \le r < 1-\eta_f(x)/x$, and zero otherwise. Integrating over $r \in [0,[1-\eta_f(x)/x]^+]$ and $x \in (\eta_V,\infty)$ yields \eqref{eq:icac_zf_out}.

\bibliographystyle{IEEEtran}
\bibliography{reference/mybib}

\end{document}